\documentclass[11pt]{article}

\usepackage{amsmath} 
\usepackage{amsthm} 
\usepackage{amssymb}	
\usepackage{graphicx} 
\usepackage{multicol} 
\usepackage{multirow}
\usepackage{color}
\usepackage[dvips,letterpaper,margin=1in,bottom=1in]{geometry}

\usepackage[utf8]{inputenc}
\usepackage[english]{babel}

\usepackage{diagbox}
\usepackage{mathtools}
\usepackage[export]{adjustbox}

\newtheorem{theorem}{Theorem}[section]

\newtheorem{lemma}[theorem]{Lemma}
\newtheorem{corollary}[theorem]{Corollary}
\newtheorem{proposition}[theorem]{Proposition}
\newtheorem{definition}{Definition}[section]
\newtheorem{example}{Example}
\newtheorem{remark}{Remark}[section]
\newtheorem{problem}{Problem}

\newcommand{\braket}[2]{\left< #1 \vphantom{#2} \middle| #2 \vphantom{#1} \right>} 

\DeclarePairedDelimiter\rbra{\lparen}{\rparen}
\DeclarePairedDelimiter\sbra{\lbrack}{\rbrack}
\DeclarePairedDelimiter\cbra{\{}{\}}
\DeclarePairedDelimiter\abs{\lvert}{\rvert}
\DeclarePairedDelimiter\Abs{\lVert}{\rVert}
\DeclarePairedDelimiter\ceil{\lceil}{\rceil}

\DeclarePairedDelimiter\ket{\lvert}{\rangle}
\DeclarePairedDelimiter\bra{\langle}{\rvert}

\newcommand{\tr} {\operatorname{tr}}
\newcommand{\poly} {\operatorname{poly}}
\newcommand{\polylog} {\operatorname{polylog}}
\newcommand{\rank} {\operatorname{rank}}

\newcommand{\sgn} {\operatorname{sgn}}
\newcommand{\SV} {\text{SV}}

\newcommand{\Real} {\operatorname{Re}}
\newcommand{\Imag} {\operatorname{Im}}

\usepackage{latexsym}
\usepackage{CJK}

\usepackage{enumerate}

\usepackage{algorithm}
\usepackage{algpseudocode}

\usepackage{stmaryrd}
\usepackage{booktabs}

\usepackage{hyperref}
\newcommand{\footremember}[2]{%
    \footnote{#2}
    \newcounter{#1}
    \setcounter{#1}{\value{footnote}}%
}

\usepackage{cite}
\usepackage{bbm}

\usepackage{tablefootnote}
\usepackage{threeparttable}
\usepackage{scrextend}

\usepackage{tikz}
\usetikzlibrary{positioning}

\usepackage{subfig}
\usepackage{quantikz}



\begin{document}

    \title{Fast Quantum Algorithms for Trace Distance Estimation}
        \author{
            Qisheng Wang \footremember{1}{Qisheng Wang is with the Graduate School of Mathematics, Nagoya University, Nagoya, Japan (e-mail: \url{QishengWang1994@gmail.com}).}
            \and Zhicheng Zhang \footremember{2}{Zhicheng Zhang is with the Centre for Quantum Software and Information, University of Technology Sydney, Sydney, Australia (e-mail: \url{iszczhang@gmail.com}).}
        }
        \date{}
        \maketitle

    \begin{abstract}
    In quantum information, trace distance is a basic metric of distinguishability between quantum states.
    However, there is no known efficient approach to estimate the value of trace distance in general. 
    In this paper, we propose efficient quantum algorithms for estimating the trace distance within additive error $\varepsilon$ between mixed quantum states of rank $r$. 
    Specifically, we first provide a quantum algorithm using $r \cdot \widetilde O\rbra{1/\varepsilon^2}$ queries to the quantum circuits that prepare the purifications of quantum states.
    Then, we modify this quantum algorithm to obtain another algorithm using $\widetilde O\rbra{r^2/\varepsilon^5}$ samples of quantum states,
    which can be applied to quantum state certification. 
    These algorithms have query/sample complexities that are independent of the dimension $N$ of quantum states, and their time complexities only incur an extra $O\rbra{\log\rbra{N}}$ factor.
    In addition, we show that the decision version of low-rank trace distance estimation is $\mathsf{BQP}$-complete.
    \end{abstract}

    \textbf{Keywords: 
quantum algorithms, trace distance, singular value decomposition, \\ Hadamard test.}

    \newpage

    \tableofcontents
    \newpage
\section{Introduction}
    
    Distinguishability measures play an important role in quantum computing and quantum information \cite{NC10,Wil13,Wat18}. Trace distance \cite{Hel67,Hel69} and fidelity \cite{Uhl76,Joz94} are two of the most commonly employed distinguishability measures between quantum states, which also have generalizations to quantum channels \cite{Kit97,Rag01,GLN05,PMM07} and quantum strategies \cite{CDP08,CDP09,Gut12,GRS18}.
    
    The trace distance 
    between two mixed quantum states $\rho$ and $\sigma$ is a metric, defined by
    \begin{equation}
        T \rbra*{\rho, \sigma} = \frac 1 2 \tr\rbra*{\abs*{\rho - \sigma}}.
    \end{equation}
    Compared to fidelity, trace distance has an operational interpretation for the maximum success probability in distinguishing quantum states in a quantum hypothesis testing experiment \cite{Hel69}.
    Estimating the value of trace distance is a basic problem both in practice and in theory. 
    
    A large amount of efforts (cf.\ \cite{RASW23}) have been made to estimate trace distance and fidelity.
    Classically, 
    they can be computed through semidefinite programming \cite{Wat09,Wat13} with time complexity polynomial in the dimension of the quantum states, which however grows exponentially as the number of qubits increases.
    By contrast, the fidelity between pure quantum states can be efficiently estimated by the SWAP test \cite{BCWdW01}.
    Generally, trace distance and fidelity estimation is even hard on quantum computers, as is shown in \cite{Wat02,Wat09b} to be $\mathsf{QSZK}$-hard.
    Nevertheless, several approaches were proposed for estimating the trace distance \cite{ZRC19,ZR22} and fidelity \cite{TYKI06,GLGP07,GT09,FL11,dSLCP11} in some practical scenarios, including those using variational quantum algorithms \cite{CPCC20,TV21,LLSL21,CSZW22,RASW23}.
    
    Since it was found that low-rank quantum states can be reconstructed with significantly fewer samples and measurements \cite{GLF+10,FGLE12,OW16,HHJ+17,vACGN22} than by general quantum state tomography \cite{DM97,DMP03}, the closeness between low-rank quantum states has attracted extensive attention. 
    For example, quantum state certification with respect to trace distance and fidelity was investigated in \cite{BOW19}, where the low-rank case was considered. 
    Recently, a polynomial-time quantum algorithm for estimating the fidelity of low-rank quantum states was developed in \cite{WZC+22}, which was later improved by \cite{WGL+22,GP22}.
    Inspired by them, a quantum algorithm for estimating the trace distance of low-rank quantum states was then developed in \cite{WGL+22},
    which proves the conjecture proposed in \cite{CCC19} that low-rank trace distance estimation is in $\mathsf{BQP}$.
    However, these known quantum algorithms for estimating the trace distance and fidelity of low-rank quantum states mentioned above have large exponents of rank and precision in their time complexities (see Table \ref{tab:complexity} for comparison). 

    In this paper, we consider the low-rank trace distance estimation problem, stated as follows.

    \begin{problem} [Low-rank trace distance estimation]
    \label{prb:main}
    Given two $N$-dimensional mixed quantum states $\rho$ and $\sigma$ of rank $r$, 
    the task is to estimate $T\rbra{\rho, \sigma}$ within additive error $\varepsilon$.
    \end{problem}
    
    \begin{table*}[!htp]
\centering
\caption{Complexity of trace distance estimation and fidelity estimation.}
\normalsize
\label{tab:complexity}
\begin{threeparttable}
\begin{tabular}{cccc}
\toprule
Task            & Resources        & Query/Sample Complexity        & Approach                                     \\ \midrule
Tomography             & Purified Access & $\widetilde O\rbra{Nr/\varepsilon}$\tnote{*} & \cite{vACGN22}     \\ \cmidrule{2-4}
                    & Identical Copies & $\widetilde \Theta\rbra{Nr/\varepsilon^2}$ & \cite{OW16,HHJ+17}     \\ \midrule
Trace Distance              & Purified Access  & $\widetilde O\rbra{r^{5}/\varepsilon^{6}}$                              & \cite{WGL+22}      \\ \cmidrule{3-4}
            &                  & $r \cdot \widetilde O \rbra{1/\varepsilon^{2}}$                                  & {Algorithm \ref{algo:purified}}        \\  \cmidrule{2-4}
                    & Identical Copies & $\widetilde O\rbra{r^2/\varepsilon^{5}}$                                & {Algorithm \ref{algo:sample}}         \\ \midrule
Fidelity            & Purified Access  & $\widetilde O\rbra{r^{12.5}/\varepsilon^{13.5}}$                        & \cite{WZC+22}      \\ \cmidrule{3-4}
                    &                  & $\widetilde O\rbra{r^{6.5}/\varepsilon^{7.5}}$                          & \cite{WGL+22}      \\ \cmidrule{3-4}
                    &                  & $\widetilde O\rbra{r^{2.5}/\varepsilon^{5}}$                            & \cite{GP22}        \\ \cmidrule{2-4}
                    & Identical Copies & $\widetilde O\rbra{r^{5.5}/\varepsilon^{12}}$                           & \cite{GP22}        \\
                    \bottomrule
\end{tabular}
\begin{tablenotes}
\item[*] Here, $N$ is the dimension of quantum states, $r$ is the rank of quantum states, and $\varepsilon$ is the required additive precision. 
\end{tablenotes}
\end{threeparttable}
\end{table*}
    
    \subsection{Main Results}
    
    We propose two efficient quantum algorithms for low-rank trace distance estimation: 
    \begin{itemize}
        \item {Algorithm \ref{algo:purified}} (purified access) with query complexity $r \cdot \widetilde O\rbra{1/\varepsilon^2}$ (see Corollary \ref{corollary:purified}); and
        \item {Algorithm \ref{algo:sample}} (sample access) with sample complexity $\widetilde O\rbra{r^2/\varepsilon^5}$ (see Corollary \ref{corollary:sample}). 
    \end{itemize}
    Here, $\widetilde O\rbra{f\rbra{a, b}} = O\rbra{f\rbra{a, b} \polylog\rbra{a, b}}$ suppresses polylogarithmic factors of parameters that appear in $\widetilde O\rbra{\cdot}$. 
    Both algorithms have small exponents of rank and precision in their complexities, thus are more suitable to be implemented in practice.
    They are also time-efficient in the sense that they have the same quantum time complexities as their query/sample complexities up to a logarithmic factor of $N$. 
    We compare them with known approaches in Table \ref{tab:complexity}, and discuss their implications in the following. 
    
    \subsubsection{Purified Access}
    Suppose that we are given quantum circuits $O_\rho$ and $O_\sigma$ preparing the purifications of $N$-dimensional mixed quantum states $\rho$ and $\sigma$. Specifically, 
    \begin{align}
        \ket{\rho}_{n+n_\rho} & = O_\rho \ket{0}_n \ket{0}_{n_\rho}, \\
        \ket{\sigma}_{n+n_\sigma} & = O_\sigma \ket{0}_n \ket{0}_{n_\sigma},
    \end{align}
    where $N = 2^n$, and the subscripts $n$, $n_\rho$ and $n_\sigma$ indicate not only the subspace but also the number of qubits involved. 
    We assume that $n_\rho, n_\sigma \leq n$ for simplicity. 
    Then, $\rho$ and $\sigma$ are obtained by tracing out the ancilla qubits:
    \begin{align}
        \rho & = \tr_{n_\rho}\rbra*{\ket{\rho}_{n+n_\rho}\bra{\rho}}, \\
        \sigma & = \tr_{n_\sigma}\rbra*{\ket{\sigma}_{n+n_\sigma}\bra{\sigma}}.
    \end{align}
    This input model, known as the quantum purified access model, is commonly used in quantum computational complexity and quantum algorithms \cite{Wat02,BKL+19,vAG19,GL20,GLM+22,RASW23,GHS21,SH21}. 
    
    Our first result is {Algorithm \ref{algo:purified}}, given purified access to the input quantum states (i.e., quantum circuits that prepare their purifications), with query complexity $r \cdot \widetilde O\rbra{1/\varepsilon^2}$. This achieves a linear dependence on the rank $r$ in the time complexity, compared to the prior best $\widetilde O\rbra{r^5/\varepsilon^6}$ by \cite{WGL+22}.

    Note that for pure quantum states, i.e., $r = 1$, trace distance can also be computed by the identity (cf. \cite[Equation (9.173)]{Wil13})
    \begin{equation} \label{eq:trace-distance-by-fidelity-pure}
        T\rbra*{\ket{\psi}, \ket{\phi}} = \sqrt{1 - \rbra*{F\rbra*{\ket{\psi}, \ket{\sigma}}}^2},
    \end{equation}
    where $F\rbra{\ket{\psi}, \ket{\sigma}} = \abs{\braket{\psi}{\phi}}$ is the fidelity between $\ket{\psi}$ and $\ket{\phi}$. 
    Suppose that $U_\psi$ and $U_\phi$ are quantum circuits that prepare $\ket{\psi}$ and $\ket{\phi}$, respectively;
    then we can estimate $T\rbra{\ket{\psi}, \ket{\phi}}$ within additive error $\varepsilon$ using $O\rbra{1/\varepsilon^2}$ queries to $U_\psi$ and $U_\phi$ by the SWAP test \cite{BCWdW01} (or {\cite[Algorithm 1]{RASW23}}) equipped with quantum amplitude estimation \cite{BHMT02} (see Appendix \ref{app:estimation-via-swap} for details).
    By comparison, {Algorithm \ref{algo:purified}} has the same complexity (only up to a logarithmic factor), and retains the $\varepsilon$-dependence even when quantum states are not pure.
    
    \subsubsection{Sample Access} 
    Suppose that identical copies of $\rho$ and $\sigma$ are directly given. 
    Our second result is {Algorithm \ref{algo:sample}}, given identical copies, with sample complexity $\widetilde O\rbra{r^2/\varepsilon^5}$, while no prior explicit
    sample complexity is known for this task. This is done by modifying {Algorithm \ref{algo:purified}} via the technique of density matrix exponentiation \cite{LMR14,KLL+17}, inspired by \cite{GP22}.
    
    A related problem --- quantum state certification with respect to trace distance given identical copies was studied in \cite{BOW19} (see also  \cite{GL20} for the case of purified access), which is to distinguish between the cases $T\rbra{\rho, \sigma} = 0$ or $T\rbra{\rho, \sigma} > \varepsilon$ with a promise that it is in either case.
    For low-rank quantum states, the sample complexity of state certification was shown in 
    \cite{BOW19} to be $\Theta\rbra{r/\varepsilon^2}$.
    Note that low-rank state certification can be
    solved by low-rank trace distance estimation; 
    however, it is not known whether the  
    converse is possible. 
    {Algorithm \ref{algo:sample}} implies a quantum algorithm, given identical copies, for low-rank state certification with time complexity $\widetilde O\rbra{r^2/\varepsilon^5 \cdot \log\rbra{N}}$, compared to the approach by \cite{BOW19} with time complexity $\widetilde O\rbra{ r^3/\varepsilon^6 + r/\varepsilon^2 \cdot \log\rbra{N} }$ (as noted in \cite{Wri22}, this is obtained by weak Schur sampling, cf.\ \cite{MdW16}, with the best known quantum Fourier transform over symmetric groups \cite{KS16}), though with a slightly higher sample complexity than \cite{BOW19}. 
    We compare them in Table \ref{tab:cmp-certification}. 
    
    \begin{table*}[!htp]
    \centering
    \caption{Sample/time tradeoff for quantum state certification with respect to trace distance.}
    \small
    \label{tab:cmp-certification}
    \begin{tabular}{cccc}
    \toprule
    Task          & Sample Complexity                & Time Complexity      & Approach        \\ \midrule
    Estimation    & $O\rbra*{r^2/\varepsilon^5 \cdot \log^2\rbra{r/\varepsilon} \log^2\rbra{1/\varepsilon} }$     & $O\rbra*{r^2/\varepsilon^5 \cdot \log^2\rbra{r/\varepsilon} \log^2\rbra{1/\varepsilon} \log\rbra{N} }$ & {Algorithm \ref{algo:sample}} \\ \midrule
    Certification & $\Theta \rbra*{r/\varepsilon^2}$ & $O\rbra*{r^3/\varepsilon^6 \cdot \log\rbra{r/\varepsilon} + r/\varepsilon^2 \cdot \log\rbra{N} }$ & \cite{BOW19} \\ \bottomrule
    \end{tabular}
    \end{table*}
    
    \subsection{Technical Overview}
    
    We give high-level overview of our quantum algorithms with both purified access and sample access. 
    
    \subsubsection{Purified Access}
    
    The prior best quantum algorithm for low-rank trace distance estimation is by \cite{WGL+22}, with query complexity $\widetilde O\rbra*{r^5/\varepsilon^6}$. In their approach, the key observation is the identity
    \begin{equation} \label{eq:trace-distance-WGL+22}
        T\rbra*{\rho, \sigma} = \tr\rbra*{ \sqrt{\abs*{\nu_{-}}} \Pi_{\nu_{+}} \sqrt{\abs*{\nu_{-}}} },
    \end{equation}
    where $\nu_{\pm} = \rbra*{\rho \pm \sigma} / 2$ and $\Pi_{\varrho}$ denotes the projector onto the support subspace of $\varrho$. Their idea, roughly speaking, is to prepare a quantum state block-encoding of $\sqrt{\abs*{\nu_{-}}} \Pi_{\nu_{+}} \sqrt{\abs*{\nu_{-}}}$ by performing a unitary block-encoding of $\sqrt{\abs*{\nu_{-}}}$ on a quantum state block-encoding of $\Pi_{\nu_{+}}$; then estimate the trace of the resulting quantum state following Eq. \eqref{eq:trace-distance-WGL+22} by quantum amplitude estimation \cite{BHMT02}. 
    This algorithm is inefficient mainly because it employs square roots of semidefinite operators and a heavily nested structure, which take considerable computational costs.
    
    To overcome these issues, we provide an efficient quantum algorithm for low-rank trace distance estimation, which is, technically, very different from the one given in \cite{WGL+22} just mentioned. We still use the notations above, and consider the singular value decomposition $\nu_- = W \Sigma V^\dag$. 
    Then, the trace distance can be expressed by the following identity:
    \begin{equation} \label{eq:trace-distance-sgn}
        T\rbra*{\rho, \sigma} = \frac 1 2 \rbra[\Big]{ \tr\rbra*{\rho \sgn^{\SV}\rbra*{\nu_-}} - \tr\rbra*{\sigma \sgn^{\SV}\rbra*{\nu_-}} },
    \end{equation}
    where $\sgn^{\SV}\rbra{\nu_-} = W \sgn\rbra{\Sigma} V^\dag$ is the singular value transformation of $\nu_-$ by the sign function
    \begin{equation}
        \sgn\rbra*{x} = \begin{cases}
            1, & x > 0, \\
            0, & x = 0, \\
            -1, & x < 0.
        \end{cases}
    \end{equation}
    To see this, we note that because $\nu_-$ is Hermitian, the spectral decomposition $\nu_- = U \Lambda U^\dag = U \abs{\Lambda} \sgn\rbra{\Lambda} U^\dag$ is also a singular value decomposition with $W = U$, $\Sigma = \abs{\Lambda}$, and $V = U \sgn\rbra{\Lambda}$ (cf. \cite[Theorem 5.5]{LB97}). Then, $\sgn^{\SV}\rbra{\nu_-} = U \Pi_{\nu_-} \sgn\rbra{\Lambda} U^\dag$, where $\Pi_{\nu_-}$ is the projector onto the support of $\nu_-$. By a simple calculation, it can be verified that $\nu_- \sgn^{\SV}\rbra{\nu_-} = U \abs{\Lambda} U^\dag$, which gives $\tr\rbra{\nu_- \sgn^{\SV}\rbra{\nu_-}} = \tr\rbra{\abs{\Lambda}} = \tr\rbra{\abs{\nu_-}} = T\rbra{\rho, \sigma}$.
    This allows us to estimate the values of $\tr\rbra{\rho \sgn^{\SV}\rbra{\nu_-}}$ and $\tr\rbra{\sigma \sgn^{\SV}\rbra{\nu_-}}$ separately, which can be done by combining the QSVT (quantum singular value transformation) technique \cite{GSLW19} with the Hadamard test \cite{AJL09}, inspired by \cite{GP22}. 
    
    To give an intuitive overview of our algorithm, the main idea is that $\tr\rbra{\rho \sgn^{\SV}\rbra{\nu_-}}$ can be estimated by the Hadamard test with an (approximate) unitary block-encoding of $\sgn^{\SV}\rbra{\nu_-}$ and the quantum state $\rho$, as shown in Figure \ref{fig:explanation}.
    
    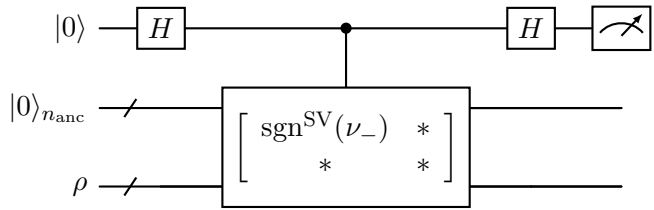
\begin{figure}[!htp]
    \centering
    \begin{quantikz}[row sep=0.5cm, ampersand replacement=\&]
        \lstick{$\ket{0}$}                    \& \gate{H} \& \ctrl{1}                \& \gate{H} \& \meter{} \\
        \lstick{$\ket{0}_{n_{\textup{anc}}}$} \& \qwbundle{}     \& \gate[2, disable auto height]{\left[ 
        \begin{array}{cc} \sgn^{\SV}\rbra*{\nu_-} & * \\ * & * \end{array} 
        \right]} \& \qw \&\qw \\
        \lstick{$\rho$}                       \& \qwbundle{}      \& \qw                        \& \qw \& \qw\\
    \end{quantikz}
    \caption{Hadamard test for estimating $\tr\rbra{\rho \sgn^{\SV}\rbra{\nu_-}}$, 
    which gets measurement outcome $0$ with probability $\rbra{1 + \tr\rbra{\rho \sgn^{\SV}\rbra{\nu_-}}}/2$, where $n_{\textup{anc}}$ 
    is the number of ancilla qubits.}
    \label{fig:explanation}
    \end{figure}
    
    \subsubsection{Sample Access}
    Our quantum algorithm with purified access is specifically designed so that it can be modified at only a little cost to obtain another algorithm that only uses identical copies. 
    We note that in {Algorithm \ref{algo:purified}}, purified access is only used for: 
    \begin{enumerate}
        \item Constructing unitary block-encodings $U_\rho$ and $U_\sigma$ of $\rho$ and $\sigma$, respectively; and
        \item Preparing identical copies of $\rho$ and $\sigma$ for the Hadamard test. 
    \end{enumerate}
    Actually, the two types of demands are also achievable with only identical copies. 
    The first demand can be achieved by density matrix exponentiation \cite{LMR14,KLL+17}, which was recently employed in \cite{GP22} to develop quantum algorithms for fidelity estimation;
    and the second demand is without doubts because identical copies are directly given. 

    As will be shown in {Algorithm \ref{algo:sample}}, density matrix exponentiation \cite{LMR14,KLL+17} is only used to produce quantum channels that approximately implement the unitary block-encodings $U_\rho$ and $U_\sigma$ constructed in {Algorithm \ref{algo:purified}}.
    Technically, we still need to (approximately) implement their inverses $U_\rho^\dag$ and $U_\sigma^\dag$. 
    To resolve this issue, suppose that $U_\rho$ is approximately implemented by the quantum channel $\mathcal{E}$ such that $\Abs{\mathcal{E} - U_\rho}_\diamond \leq \delta$, and $\mathcal{E}$ is given by a quantum circuit $W$ with $k$ samples of $\rho$ such that
    \begin{equation} \label{eq:approx-implement}
        \mathcal{E}\rbra*{\varrho} = \tr_{\text{env}}\rbra*{ W \rbra*{ \underbrace{\rho^{\otimes k} \otimes \ket{0}_\ell\bra{0}}_{\text{env}} \otimes \varrho} W^\dag }
    \end{equation}
    for every quantum state $\varrho$. 
    Then, consider the quantum channel $\mathcal{E}^{\textup{inv}}$ that is obtained by using $W^\dag$ in place of $W$ in Eq. (\ref{eq:approx-implement}).
    It can be shown that $\mathcal{E}^{\textup{inv}}$ approximately implements $U_\rho^\dag$, i.e., $\Abs{\mathcal{E}^{\textup{inv}} - U_\rho^\dag}_\diamond \leq \delta$ 
    (see Lemma \ref{lemma:block-encoding-invertible}). 
    This is visualized in Figure~\ref{fig:U-to-Udag}. 
    
    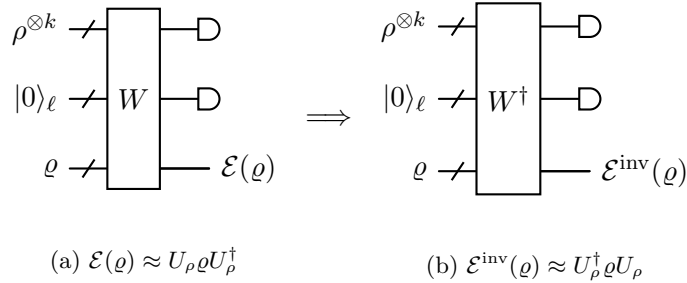
\begin{figure}[!htp]
        \centering
        \subfloat[$\mathcal{E}\rbra*{\varrho} \approx U_\rho \varrho U_\rho^\dag$]{
        \begin{quantikz}
            \lstick{$\rho^{\otimes k}$}            & \gate[3]{W}\qwbundle{}   & \meterD{} \\
            \lstick{$\ket{0}_{\ell}$}      &  \qwbundle{}               & \meterD{} \\
            \lstick{$\varrho$}                    &    \qwbundle{}            &\qw\rstick{$\mathcal{E}(\varrho)$} \\
        \end{quantikz}
        }
        \subfloat{
        $\begin{aligned}\Longrightarrow\end{aligned}$
        }
        \renewcommand\thesubfigure{b}\subfloat[$\mathcal{E}^{\textup{inv}}\rbra*{\varrho} \approx U_\rho^\dag \varrho U_\rho$]{
        \begin{quantikz}
            \lstick{$\rho^{\otimes k}$}            & \gate[3]{W^\dag}\qwbundle{}   & \meterD{} \\
            \lstick{$\ket{0}_{\ell}$}      &  \qwbundle{}               & \meterD{} \\
            \lstick{$\varrho$}                    &    \qwbundle{}            &\qw\rstick{$\mathcal{E}^{\textup{inv}}(\varrho)$} \\
        \end{quantikz}
        }
        \caption{Quantum circuit for approximately implementing the inverse of unitary operators.}
        \label{fig:U-to-Udag}
    \end{figure}

    We can use the above method to approximate $U_\rho$, $U_\sigma$ and their inverses and controlled versions by quantum channels that use only samples of $\rho$ and $\sigma$. 
    This allows us to extend our algorithm with purified access to the case of sample access.
    The main step, shown in Figure \ref{fig:explanation}, is to estimate $\tr\rbra{\rho \sgn^{\SV}\rbra{\nu_-}}$ by the Hadamard test with a unitary block-encoding $U_{\sgn^{\SV}\rbra{\nu_-}}$ of $\sgn^{\SV}\rbra{\nu_-}$.
    We use the QSVT technique to construct $U_{\sgn^{\SV}\rbra{\nu_-}}$ using queries to $U_\rho$ and $U_\sigma$ (and their inverses and controlled versions), which can be approximately implemented by the quantum channels constructed above. 
    This gives us a quantum algorithm for trace distance estimation with sample access. 

    \subsection{Lower Bounds and Hardness}
    
    As our algorithm with identical copies for trace distance estimation also applies to quantum state certification with respect to trace distance, the lower bound for the sample complexity of trace distance estimation follows from that of state certification, which is known to be $\Omega \rbra{r/\varepsilon^2}$ by \cite{BOW19}. 
    The best known lower bound for the time complexity of low-rank trace distance estimation is $\omega\rbra*{\poly\rbra*{\log\rbra*{r}, 1/\varepsilon}}$ unless $\mathsf{BQP} = \mathsf{QSZK}$ by \cite{WGL+22}. 
    However, there is no known lower bound for the query complexity of trace distance estimation. 

    Low-rank fidelity estimation is known to be $\mathsf{BQP}$-complete by combining the $\mathsf{BQP}$-hardness for pure-state fidelity estimation in \cite{RASW23} and the polynomial-time quantum algorithm for low-rank fidelity estimation in \cite{WZC+22}.
    In this paper, we show that low-rank trace distance estimation is also $\mathsf{BQP}$-complete by reducing it from the pure-state fidelity estimation \cite{RASW23} (see Theorem \ref{thm:td-est-bqp}).
    Therefore, there is probably no efficient classical algorithm for low-rank trace distance estimation unless $\mathsf{BQP} = \mathsf{BPP}$.
    Nevertheless, this does not rule out the possibility of dequantized algorithms for low-rank trace distance estimation, if ``sampling and query access'' \cite{Tan19,GLT18,CGL+20,Tan21} to the matrix representations of quantum states is given.
    
    \subsection{Discussion and Extensions}
    
    In this paper, we propose efficient quantum algorithms for low-rank trace distance estimation. 
    This is done by using the formula Eq. \eqref{eq:trace-distance-sgn}, different from prior approaches, that expresses trace distance in two terms and enables us to compute each term separately by combining QSVT \cite{GSLW19} with the Hadamard test \cite{AJL09}. 
    Unlike prior quantum algorithms that take advantage of the low-rank condition \cite{WZC+22,WGL+22,GP22}, we avoid techniques with heavy computational costs such as positive powers of quantum operators. 
    This is the main reason why we are able to achieve a linear dependence on the rank $r$ in the time complexity, thereby yielding a quantum algorithm with sample (and also time) complexity $\widetilde O\rbra{r^2/\varepsilon^5}$ with small exponents of $r$ and $\varepsilon$. 
    
    In real experiments, especially in the NISQ (noisy intermediate-scale quantum) era \cite{Pre18}, the true quantum states are only approximately low-rank. 
    It can be shown that our quantum algorithms apply to not only strictly but also approximately low-rank quantum states, in the sense that the sum of the largest eigenvalues is close to $1$ (see Section \ref{sec:approx-low-rank} for details). 
    By contrast, the quantum algorithm for trace distance estimation in \cite{WGL+22} does not consider this case. The quantum state certification with respect to trace distance studied in \cite{BOW19} considers the approximately low-rank case but does not apply to our estimation task.

    The depth complexity is also an important consideration when designing quantum algorithms, especially in the near-future \cite{BGK18}. 
    Some tasks are known to have low-depth quantum algorithms, e.g., quantum Fourier transform \cite{CW00}, hidden linear function problem \cite{BGK18}, Hamiltonian simulation \cite{ZWY21}, quantum state preparation \cite{STY+21,Ros21,ZLY22}, and multivariate trace estimation \cite{QKW22}.
    Our quantum algorithm given identical copies with sample complexity $\widetilde O\rbra{r^2/\varepsilon^5}$ can be partially parallelized to achieve a depth complexity of $\widetilde O\rbra{r^2/\varepsilon^3}$.
    In comparison, the algorithm for quantum state certification with respect to trace distance in \cite{BOW19} has time complexity $\widetilde O\rbra{r^3/\varepsilon^6}$, while its depth remains the same as its time complexity due to the use of the quantum Fourier transform over symmetric groups in \cite{KS16} (with depth complexity discussed in \cite[Section 5]{KS16}). 
    
    As discussed above, our quantum algorithms are not only efficient in the sense of query/sample and time complexity but also robust to small errors in the input quantum states. 
    For this reason, we believe our algorithms could have potential applications in practice. 
    We hope our techniques in this paper can bring new ideas to other quantum algorithms. 

    \subsection{Recent Developments} 
    After the work described in this paper, several quantum algorithms for quantum state testing regarding trace norm/distance were developed
    based on the framework for trace distance estimation proposed in this paper. 
    \begin{itemize}
        \item Nuradha, Goldfeld, and Wilde \cite{NGW23} proposed a hypothesis testing based auditing pipeline for quantum differential privacy with domain knowledge, and related its type-I error to the number of samples.
        \item 
        Le Gall, Liu, and Wang \cite{LGLW23} showed that the space-bounded decision version of trace distance estimation is $\mathsf{BQL}$-complete (while its time-bounded counterpart, \textit{quantum state distinguishability}, is $\mathsf{QSZK}$-complete \cite{Wat02,Wat09b}), 
        and the space-bounded version of \textit{quantum state certification} \cite{BOW19} is $\mathsf{coRQ_{\text{U}}L}$-complete. 
        Here, the two complexity classes $\mathsf{BQL}$ \cite{Wat03,FR21} and $\mathsf{coRQ_{\text{U}}L}$ \cite{Wat01} stand for Bounded-error Quantum Logarithmic-space and the complement class of $\mathsf{RQ_{\text{U}}L}$ (Randomized Quantum Unitary Logarithmic-space), respectively. 
    \end{itemize}

    \subsection{Open Problems}

    To conclude this section, we mention several open problems that are related to our work:
    \begin{itemize}
        \item Certain tasks for quantum state learning can be done in sublinear time, e.g., von Neumann entropy estimation with multiplicative error \cite{GHS21}.
        It would be interesting to study whether trace distance and fidelity can be estimated (to an additive or multiplicative error) with a \textit{sublinear} dependence on the rank $r$. 
        \item Can we estimate trace distance with shallower quantum circuits?
        \item To the best of our knowledge, the quantum lower and upper bounds for estimating both trace distance and fidelity are far from being tight.
        For example, for sample access, the sample upper bounds for trace distance and fidelity are $\widetilde O\rbra{r^2/\varepsilon^5}$ (this work) and $\widetilde O\rbra{r^{5.5}/\varepsilon^{12}}$ \cite{GP22}, while their corresponding lower bounds are $\Omega\rbra{r/\varepsilon^2}$ and $\Omega\rbra{r/\varepsilon}$ according to \cite{BOW19}.
        A direct question is whether we can tighten the gap between their lower and upper bounds. 
        \item Recently, our quantum algorithms for trace distance estimation were adapted for auditing quantum differential privacy in \cite{NGW23}. 
        Can we find other applications?
    \end{itemize}
    
    \subsection{Organization of This Paper}
    
    In the rest of this paper, we first include necessary preliminaries in Section \ref{sec:preliminary}.
    Then, we will provide quantum algorithms with purified access and identical copies with their analysis, respectively, in Section \ref{sec:algo}. 
    In Section \ref{sec:bqp-completeness}, we show the $\mathsf{BQP}$-completeness of low-rank trace distance estimation. 
    In Section \ref{sec:approx-low-rank}, we consider how our algorithms can be applied to approximately low-rank quantum states.
    
    \section{Preliminaries} \label{sec:preliminary}
    
    In this section, we will introduce quantum query complexity, approximate rank, block-encoding, quantum singular value transformation, and the technique for sampling to block-encoding that will be used in our algorithms. 
    Throughout this paper, we denote $\sbra{N} = \cbra{1, 2, \dots, N}$.

    \subsection{Quantum Query Complexity}
    
    Suppose $\mathcal{O}$ is a quantum unitary oracle (which can be understood as a given quantum circuit). 
    A quantum query algorithm $\mathcal{A}$ can be described by a quantum circuit that consists of \mbox{(controlled-)$\mathcal{O}$} and (controlled-)$\mathcal{O}^\dag$ and elementary quantum gates.
    Throughout this paper, one query to $\mathcal{O}$ means one query to (controlled-)$\mathcal{O}$ or (controlled-)$\mathcal{O}^\dag$ if not specified. 
    The query complexity of $\mathcal{A}$ is the number of queries to $\mathcal{O}$ in $\mathcal{A}$. 
    The time complexity of $\mathcal{A}$ is the number of queries to $\mathcal{O}$ and elementary quantum gates in $\mathcal{A}$. 
    The depth complexity of $\mathcal{A}$ is the maximal length of a (directed) path from an input qubit to an output qubit, where each elementary quantum gate or query to $\mathcal{O}$ costs $1$ unit of length. 
    
    \subsection{Approximate Rank} \label{sec:approx-rank}
    
    Suppose $A = \sum_j \lambda_j \ket{\psi_j} \bra{\psi_j}$ is an Hermitian operator. Let $\rank_\delta\rbra*{A}$ be the approximate rank of $A$ with respect to $\delta$ defined by
    \begin{equation}
        \rank_\delta\rbra*{A} = \sum_{j \colon \abs*{\lambda_j} > \delta} 1.
    \end{equation}
    Especially, the rank of $A$ is $\rank\rbra*{A} = \rank_0\rbra*{A}$. 
    We will discuss how our quantum algorithms can be applied to approximately low-rank quantum states in Section \ref{sec:approx-low-rank}.
    Let $w\rbra*{A, \delta}$ be the sum of absolute eigenvalues of $A$ not greater than $\delta$, defined by
    \begin{equation} \label{eq:w}
        w\rbra*{A, \delta} = \sum_{j \colon \abs*{\lambda_j} \leq \delta} \abs*{\lambda_j},
    \end{equation}
    which will be used in the conditions of our quantum algorithms (see Theorem \ref{thm:purified} and Theorem \ref{thm:sample}). 
    In the following, we give an upper bound for $w\rbra*{A, \delta}$ by $\rank\rbra*{A}$.
    
    \begin{proposition}
        For every $\delta \geq 0$, we have $w\rbra*{A, \delta} \leq \delta \cdot \rank\rbra*{A}$ for every Hermitian operator $A$.
    \end{proposition}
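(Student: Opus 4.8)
The plan is elementary: I would split the sum defining $w\rbra*{A, \delta}$ according to whether each eigenvalue vanishes, and then bound the surviving terms one at a time. Starting from the spectral decomposition $A = \sum_j \lambda_j \ket{\psi_j}\bra{\psi_j}$, recall that $w\rbra*{A, \delta} = \sum_{j \colon \abs*{\lambda_j} \leq \delta} \abs*{\lambda_j}$. Since every index with $\lambda_j = 0$ contributes $0$ to this sum, the value of $w\rbra*{A, \delta}$ is unchanged if I restrict the index set to those $j$ with $0 < \abs*{\lambda_j} \leq \delta$; call this restricted set $S$, so that $w\rbra*{A, \delta} = \sum_{j \in S} \abs*{\lambda_j}$.

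Next, I would apply two observations on $S$ in tandem. For each $j \in S$ the defining inequality gives $\abs*{\lambda_j} \leq \delta$, hence $\sum_{j \in S} \abs*{\lambda_j} \leq \delta \cdot \abs*{S}$. On the other hand, every index in $S$ corresponds by construction to a \emph{nonzero} eigenvalue, so $S \subseteq \set{j}{\lambda_j \neq 0}$, whence $\abs*{S} \leq \rank\rbra*{A}$ by the definition $\rank\rbra*{A} = \rank_0\rbra*{A}$. Chaining these bounds yields $w\rbra*{A, \delta} = \sum_{j \in S} \abs*{\lambda_j} \leq \delta \abs*{S} \leq \delta \rank\rbra*{A}$, which is exactly the claimed inequality.

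There is essentially no genuine obstacle here, as the argument is a one-line counting estimate; the only point that warrants a moment of care is the degenerate case $\delta = 0$, where $S$ is empty and both sides equal $0$, so the inequality holds trivially and consistently with $\rank_0$. It is worth noting that Hermiticity is used only to guarantee a spectral decomposition with a well-defined collection of eigenvalues, so the key structural input is simply the diagonalizability of $A$ together with the bookkeeping that zero eigenvalues drop out of both $w\rbra*{A, \delta}$ and $\rank\rbra*{A}$.
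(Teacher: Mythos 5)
Your proof is correct; the paper itself states this proposition without proof, treating it as immediate, and your counting argument (each of the at most $\rank\rbra*{A}$ nonzero eigenvalues appearing in the sum contributes at most $\delta$) is exactly the intended justification. No issues.
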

    \begin{proof}
        Suppose that $A$ is $N$-dimensional and $r = \rank\rbra{A}$. 
        Let $\lambda_j$ for $1 \leq j \leq N$ sorted by their absolute values, i.e., $\abs{\lambda_1} \geq \abs{\lambda_2} \geq \dots \geq \abs{\lambda_N}$. 
        Since there are at most $r$ non-zero eigenvalues $\lambda_j$ of $A$, we have $\lambda_j = 0$ for every $r < j \leq N$. 
        Therefore, we conclude that
        \begin{equation}
            w\rbra{A, \delta} = \sum_{j \in \sbra{r} \colon \abs{\lambda_j} \leq \delta} \abs{\lambda_j} \leq \sum_{j \in \sbra{r}} \delta = r \delta = \delta \cdot \rank\rbra{A}.
        \end{equation}
    \end{proof}
    
    \subsection{Quantum Amplitude Estimation}
    
    Estimating the amplitude of a pure quantum state is a basic subroutine that is commonly used in quantum algorithms. 
    
    \begin{theorem} [Quantum amplitude estimation {\cite[Theorem 12]{BHMT02}}] \label{thm:amp-estimation}
        Suppose $U$ is a unitary operator such that
        \begin{equation}
            U \ket{0}\ket{0} = \sqrt{p} \ket{0} \ket{\phi_0} + \sqrt{1 - p} \ket{1} \ket{\phi_1},
        \end{equation}
        where $\ket{\phi_0}$ and $\ket{\phi_1}$ are normalized pure quantum states, and $p \in \sbra*{0, 1}$. 
        There is a quantum algorithm that outputs $\widetilde p$ such that 
        \begin{equation}
            \abs*{\widetilde p - p} \leq \frac{2\pi\sqrt{p(1-p)}}{M} + \frac{\pi^2}{M^2}
        \end{equation}
        with probability $\geq 8/\pi^2$ using $O\rbra*{M}$ queries to $U$. 
    \end{theorem}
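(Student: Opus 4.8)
The plan is to reduce estimating $p$ to a phase estimation problem built on Grover-type amplitude amplification, following \cite{BHMT02}. First I would encode the amplitude as an angle, writing $p = \sin^2\theta$ for the unique $\theta \in \sbra*{0, \pi/2}$, so that $\sqrt{p(1-p)} = \abs*{\sin\theta\cos\theta} = \frac 1 2 \abs*{\sin 2\theta}$; the goal then becomes estimating $\theta$ and reporting $\widetilde p = \sin^2\widetilde\theta$.

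Next I would construct the Grover operator $Q = -U S_0 U^\dag S_\chi$, where $S_\chi = I - 2\rbra*{\ket{0}\bra{0} \otimes I}$ reflects about the ``good'' subspace flagged by the first qubit being $\ket{0}$, and $S_0 = I - 2\ket{0}\bra{0}$ reflects about the initial state $\ket{0}\ket{0}$. The key algebraic fact is that the two-dimensional subspace spanned by $\ket{0}\ket{\phi_0}$ and $\ket{1}\ket{\phi_1}$ is invariant under $Q$, on which $Q$ acts as a rotation by angle $2\theta$; hence its eigenvalues there are $e^{\pm 2i\theta}$, with eigenvectors $\ket{\psi_\pm}$, and the state $U\ket{0}\ket{0}$ lies in this subspace, decomposing (up to global phases) as the balanced superposition $\frac{1}{\sqrt 2}\rbra*{ e^{i\theta}\ket{\psi_+} + e^{-i\theta}\ket{\psi_-} }$.

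I would then run quantum phase estimation on $Q$ with $M$ controlled applications, feeding in $U\ket{0}\ket{0}$ as the eigenstate superposition. The standard phase-estimation analysis yields an outcome encoding $\widetilde\theta$ with $\abs*{\widetilde\theta - \theta} \leq \pi/M$ with probability at least $8/\pi^2$; because the input is a balanced superposition of the two eigenvectors and $\sin^2\rbra*{\cdot}$ is symmetric under $\theta \mapsto -\theta$, measuring either phase branch certifies the same $\widetilde p$. The total number of queries to $U$ is $O\rbra*{M}$, since phase estimation applies $Q$ a total of $O\rbra*{M}$ times and each $Q$ costs a constant number of queries to $U$ and $U^\dag$.

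Finally I would propagate the angular error to the amplitude. Writing $\widetilde\theta = \theta + \epsilon$ with $\abs*{\epsilon} \leq \pi/M$ and using the identity $\sin^2\widetilde\theta - \sin^2\theta = \sin\rbra*{\widetilde\theta + \theta}\sin\rbra*{\widetilde\theta - \theta}$, I would bound
\begin{equation}
    \abs*{\widetilde p - p} = \abs*{\sin\rbra*{2\theta + \epsilon}} \cdot \abs*{\sin\epsilon} \leq \rbra*{\abs*{\sin 2\theta} + \abs*{\epsilon}}\abs*{\epsilon} \leq \frac{2\pi\sqrt{p(1-p)}}{M} + \frac{\pi^2}{M^2},
\end{equation}
using $\abs*{\sin x} \leq \abs*{x}$ and $\abs*{\sin 2\theta} = 2\sqrt{p(1-p)}$, which is exactly the claimed bound. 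The main obstacle is the phase-estimation step: one must verify the $\pi/M$ accuracy together with the constant success probability $8/\pi^2$ for general (non-dyadic) $\theta$, and handle that the input is a two-eigenvector superposition rather than a single eigenstate, so that the symmetry $\theta \mapsto \pi - \theta$ guarantees both measurement branches return a valid estimate of $\widetilde p$.
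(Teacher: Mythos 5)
Your proposal is correct, and it is essentially the proof of the cited source: the paper itself gives no proof of this statement, importing it verbatim as Theorem 12 of \cite{BHMT02}, whose argument is exactly what you reconstruct --- encode $p = \sin^2\theta$, run phase estimation on the Grover operator $Q = -U S_0 U^\dag S_\chi$ with eigenvalues $e^{\pm 2i\theta}$ to get $\abs*{\widetilde\theta - \theta} \leq \pi/M$ with probability $\geq 8/\pi^2$, and propagate the error via $\abs*{\sin^2\widetilde\theta - \sin^2\theta} \leq 2\abs*{\epsilon}\sqrt{p(1-p)} + \epsilon^2$ (their Lemma 7). The one point you leave as an obstacle is handled in \cite{BHMT02} just as you suggest: the $8/\pi^2$ guarantee holds conditioned on either branch of the balanced two-eigenvector superposition, and both branches yield the same $\widetilde p$ by the symmetry $\sin^2\rbra*{-\theta} = \sin^2\theta$, so no further idea is missing.
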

    Especially, if no prior knowledge is known for $p$, we can estimate $p$ within additive error $\varepsilon$ using $O\rbra*{1/\varepsilon}$ queries to $U$. 
    \subsection{Block-Encodings}

    Block-encoding is a conventional description of quantum operators (cf. \cite{GSLW19}) when we focus on a certain part (e.g., upper-left corner) of the operators. 
    In this paper, we write $\ket{0}_a$ to denote $\ket{0}^{\otimes a}$, where the subscript $a$ indicates the number of qubits.
    
    \begin{definition} [Block-encoding] \label{def:block-encoding}
    Suppose $A$ is an $n$-qubit operator, $\alpha, \varepsilon \geq 0$ and $a \in \mathbb{N}$. An $(n+a)$-qubit unitary operator $B$ is said to be an $(\alpha, a, \varepsilon)$-block-encoding of $A$, if
    \begin{equation}
        \Abs{ \alpha \prescript{}{a}{\bra 0} B \ket 0_a - A } \leq \varepsilon,
    \end{equation}
    where $\Abs{\cdot}$ denotes the operator norm, defined by
    \begin{equation}
        \Abs*{A} = \sup\limits_{\Abs*{\ket{\psi}} = 1} \Abs*{A \ket{\psi}}.
    \end{equation}
    \end{definition}

    Intuitively, $A$ is represented by the matrix in the upper left corner of $B$, i.e.
    \begin{equation}
        B \approx \begin{bmatrix}
            A/\alpha & * \\
            * & *
        \end{bmatrix}.
    \end{equation}
    
    \subsubsection{Linear Combination of Block-Encoded Operators}

    We will introduce the LCU (Linear-Combination-of-Unitaries) technique \cite{CW12,BCC+15}. 
    LCU is usually used to implement a block-encoding of a linear combination of several block-encoded matrices. 
    Roughly speaking, suppose we are given several unitary oprators $U_k$, where each $U_k$ is a block-encoding of a matrix $A_k$.
    Then, the LCU technique allows us to implement a block-encoding of $\sum_{k} y_k A_k$, which is a linear combination of $A_k$.
    
    The following version of LCU is taken from \cite{GSLW19}.
    
    \begin{definition} [State preparation pair] \label{def:state-preparation-pair}
        Let $y \in \mathbb{C}^m$ with $\Abs{y}_1 \leq \beta$, and $\varepsilon \geq 0$. A pair of unitary operator $(P_L, P_R)$ is called a $(\beta, b, \varepsilon)$-state-preparation-pair if $P_L \ket{0}_{b} = \sum_{j \in [2^b]} c_j \ket{j}$ and $P_R \ket{0}_b = \sum_{j \in [2^b]} d_j \ket{j}$ such that $\sum_{j \in [m]} \abs{ \beta c_j^* d_j - y_j } \leq \varepsilon$ and $c_j^*d_j = 0$ for all $m \leq j < 2^b$.
    \end{definition}

    \begin{theorem} [Linear combination of block-encoded operators {\cite[Lemma 29]{GSLW19}}] \label{thm:lcu}
        Suppose
        \begin{enumerate}
          \item $y \in \mathbb{C}^m$ with $\Abs{y}_1 \leq \beta$, and $(P_L, P_R)$ is a $(\beta, b, \varepsilon_1)$-state-preparation-pair for $y$.
          \item For every $k \in [m]$, $U_k$ is an $(n+a)$-qubit unitary operator that is an $(\alpha, a, \varepsilon)$-block-encoding of an $n$-qubit operator $A_k$.
        \end{enumerate}
        Then we can implement an $(n+a+b)$-qubit quantum operator $\widetilde U$ using $1$ query to each of $P_L^\dag$, $P_R$ and (controlled-)$U_k$ for $k \in [m]$, and $O(b^2)$ elementary quantum gates such that $\widetilde U$ is an $(\alpha\beta, a+b, \alpha\varepsilon_1+\alpha\beta\varepsilon_2)$-block-encoding of $A = \sum_{k \in [m]} y_k A_k$.
    \end{theorem}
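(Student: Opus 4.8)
The plan is to use the standard \textsc{prepare}--\textsc{select}--\textsc{prepare} construction underlying all linear-combination-of-unitaries arguments. First I would introduce the selection unitary on the full $(b+a+n)$-qubit register,
\begin{equation}
    W = \sum_{k \in [m]} \ket{k}\bra{k} \otimes U_k + \rbra*{ I_b - \sum_{k \in [m]} \ket{k}\bra{k} } \otimes I_{a+n} ,
\end{equation}
which applies $U_k$ to the last $a+n$ qubits conditioned on the $b$-qubit index register holding $\ket{k}$ (and acts trivially on indices outside $[m]$, so as to remain unitary). Realizing $W$ costs exactly one query to each (controlled-)$U_k$, plus $O(b^2)$ elementary gates for the index bookkeeping. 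I then define
\begin{equation}
    \widetilde U = \rbra*{ P_L^\dag \otimes I_{a+n} } \, W \, \rbra*{ P_R \otimes I_{a+n} } ,
\end{equation}
which spends one further query to each of $P_R$ and $P_L^\dag$, matching the stated resource count.

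The heart of the argument is to project $\widetilde U$ onto its top-left $(a+b)$-qubit block and recognize it as $A/\rbra*{\alpha\beta}$. From Definition \ref{def:state-preparation-pair} I have $\rbra*{\bra{0}_b \otimes I}\rbra*{P_L^\dag \otimes I} = \sum_j c_j^* \bra{j} \otimes I$ and $\rbra*{P_R \otimes I}\rbra*{\ket{0}_b \otimes I} = \sum_j d_j \ket{j} \otimes I$; combined with $W\rbra*{\ket{k} \otimes \ket{0}_a \otimes I_n} = \ket{k} \otimes U_k\rbra*{\ket{0}_a \otimes I_n}$ for $k \in [m]$, the off-diagonal terms in the index register cancel and I obtain
\begin{equation}
    \rbra*{ \bra{0}_{a+b} \otimes I_n } \widetilde U \rbra*{ \ket{0}_{a+b} \otimes I_n } = \frac{1}{\alpha} \sum_{k \in [m]} c_k^* d_k \, \widetilde A_k , \qquad \widetilde A_k := \alpha \rbra*{ \bra{0}_a \otimes I_n } U_k \rbra*{ \ket{0}_a \otimes I_n } ,
\end{equation}
where Definition \ref{def:block-encoding} gives $\Abs*{ \widetilde A_k - A_k } \le \varepsilon_2$ and $\Abs*{ \widetilde A_k } \le \alpha$ automatically (a compressed block of a unitary has norm at most $1$). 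Here the state-preparation condition $c_j^* d_j = 0$ for $m \le j < 2^b$ is exactly what kills the contribution of the trivial indices. Multiplying through by the normalization $\alpha\beta$ turns the right-hand side into $\sum_k \rbra*{ \beta c_k^* d_k } \widetilde A_k$.

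Finally I would bound the block-encoding error $\Abs*{ \alpha\beta\rbra*{\bra{0}_{a+b}\otimes I}\widetilde U\rbra*{\ket{0}_{a+b}\otimes I} - A }$ by inserting $\sum_k y_k \widetilde A_k$ and applying the triangle inequality, which separates it into a state-preparation term $\sum_k \abs*{ \beta c_k^* d_k - y_k } \Abs*{\widetilde A_k}$ and a block-encoding term $\sum_k \abs*{ y_k } \Abs*{ \widetilde A_k - A_k }$. The first is at most $\alpha \sum_k \abs*{ \beta c_k^* d_k - y_k } \le \alpha\varepsilon_1$ by the state-preparation-pair property, and the second is at most $\varepsilon_2 \Abs*{y}_1 \le \beta\varepsilon_2 \le \alpha\beta\varepsilon_2$, the last inequality using $\alpha \ge 1$; summing gives the claimed error $\alpha\varepsilon_1 + \alpha\beta\varepsilon_2$. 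I expect no deep obstacle here --- the entire content is linear-algebraic --- so the only thing demanding care is the register bookkeeping: ensuring the three ancilla/index/system factors and their $\ket{0}$-projections are tracked consistently so that the index cross-terms genuinely vanish and the two error sources decouple cleanly.
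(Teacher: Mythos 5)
This theorem appears in the paper only as a quoted preliminary, cited from \cite[Lemma 29]{GSLW19} with no proof given, and your prepare--select--unprepare argument is exactly the standard proof of that lemma; it is correct, including the key use of $c_j^* d_j = 0$ for $m \le j < 2^b$ to kill the trivial indices and the clean split of the error into a state-preparation term and a block-encoding term. The one cosmetic point is that your calculation actually establishes the slightly tighter bound $\alpha\varepsilon_1 + \beta\varepsilon_2$, and your final weakening $\beta\varepsilon_2 \le \alpha\beta\varepsilon_2$ silently assumes $\alpha \ge 1$, which is not among the stated hypotheses --- but this discrepancy is inherited from the phrasing of the cited source and is harmless in every application made of the lemma in this paper (where $\alpha = 1$ or $\alpha = 4/\pi$).
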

    
    \subsubsection{Product of Block-Encoded Operators}

    The following theorem is a technique to construct a unitary block-encoding of the product of two block-encoded matrices. 
    Suppose we are given two unitary operators that are block-encodings of matrices $A$ and $B$, respectively.
    Then, we are able to implement a block-encoding of the product $AB$ of $A$ and $B$. 
    
    \begin{theorem}
    [Product of block-encoded matrices {\cite[Lemma 30]{GSLW19}}]
    \label{thm:product-block}
    Suppose
    \begin{enumerate}
      \item Unitary operator $U$ is an $(\alpha, a, \delta)$-block-encoding of an $n$-qubit operator $A$.
      \item Unitary operator $V$ is a $(\beta, b, \varepsilon)$-block-encoding of an $n$-qubit operator $B$.
    \end{enumerate}
    Then we can implement a quantum operator $\widetilde U$ using $1$ query to each of $U$ and $V$ such that $\widetilde U$ is an $(\alpha\beta, a+b, \alpha\varepsilon + \beta\delta)$-block-encoding of $AB$.
    \end{theorem}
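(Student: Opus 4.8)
\textit{Proof plan.} The plan is to write down the obvious sequential circuit and then track the top-left block together with the two error contributions. Write $\widetilde A \coloneqq \prescript{}{a}{\bra 0} U \ket 0_a$ and $\widetilde B \coloneqq \prescript{}{b}{\bra 0} V \ket 0_b$ for the top-left blocks of $U$ and $V$; these are $n$-qubit operators satisfying $\Abs*{\widetilde A}, \Abs*{\widetilde B} \leq 1$ (a corner of a unitary is a contraction), and by hypothesis $\Abs*{\alpha \widetilde A - A} \leq \delta$ and $\Abs*{\beta \widetilde B - B} \leq \varepsilon$. The candidate $\widetilde U$ is obtained by lifting both unitaries to the joint $(n+a+b)$-qubit register: $U$ acts on the $n$-qubit system together with the $a$-qubit ancilla (trivially on the $b$-qubit ancilla), and $V$ acts on the system together with the $b$-qubit ancilla (trivially on the $a$-qubit ancilla). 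Define $\widetilde U$ as their composition, running $V$ first and then $U$. This clearly uses exactly one query to each of $U$ and $V$ plus the trivial embeddings.

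First I would verify the structural identity
\begin{equation}
    \prescript{}{a+b}{\bra 0} \widetilde U \ket 0_{a+b} = \widetilde A \widetilde B,
\end{equation}
so that the block-encoded operator is precisely $\alpha\beta \widetilde A \widetilde B$. The point is that the $a$-qubit ancilla is a spectator for $V$ and the $b$-qubit ancilla is a spectator for $U$, so the projection $\ket 0_a \bra 0$ commutes through $V$ and $\ket 0_b \bra 0$ commutes through $U$. Collapsing the sandwich one register at a time then leaves $\prescript{}{a}{\bra 0} U \ket 0_a$ composed with $\prescript{}{b}{\bra 0} V \ket 0_b$ acting on the common system register, which is exactly $\widetilde A \widetilde B$.

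It then remains to bound $\Abs*{\alpha\beta \widetilde A \widetilde B - AB} \leq \alpha\varepsilon + \beta\delta$. I would telescope as
\begin{equation}
    \alpha\beta \widetilde A \widetilde B - AB = \alpha \widetilde A \rbra*{\beta \widetilde B - B} + \rbra*{\alpha \widetilde A - A} B,
\end{equation}
and apply the triangle inequality together with submultiplicativity of the operator norm:
\begin{equation}
    \Abs*{\alpha\beta \widetilde A \widetilde B - AB} \leq \alpha \Abs*{\widetilde A} \Abs*{\beta \widetilde B - B} + \Abs*{\alpha \widetilde A - A} \Abs*{B} \leq \alpha \cdot 1 \cdot \varepsilon + \delta \Abs*{B}.
\end{equation}
Finishing with the sub-normalization $\Abs*{B} \leq \beta$ of the target operator gives the claimed bound $\alpha\varepsilon + \beta\delta$.

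The main obstacle I anticipate is bookkeeping rather than any deep estimate. Two points need care: making the decoupling of the two ancilla registers in the second step precise, so that the extracted block is genuinely the product $\widetilde A \widetilde B$ and not an entangled remnant; and choosing the telescoping so that it yields exactly $\alpha\varepsilon + \beta\delta$ rather than a symmetric but weaker bound. The asymmetry is resolved by spending the contraction bound $\Abs*{\widetilde A} \leq 1$ on the $V$-error term and the sub-normalization $\Abs*{B} \leq \beta$ on the $U$-error term; the mirror-image telescoping $\rbra*{\alpha \widetilde A - A}\beta \widetilde B + A\rbra*{\beta \widetilde B - B}$ gives the same bound with the roles of the two operators exchanged.
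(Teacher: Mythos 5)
The paper does not actually prove this statement---it is imported verbatim from \cite[Lemma 30]{GSLW19}---and your argument is exactly the standard proof given there: the interleaved circuit in which $U$ acts on the system plus the $a$-register and $V$ on the system plus the $b$-register, the identity $\prescript{}{a+b}{\bra{0}} \widetilde U \ket{0}_{a+b} = \widetilde A \widetilde B$ (valid because each ancilla is a spectator for the other unitary), and the telescoping $\alpha\beta\widetilde A\widetilde B - AB = \alpha \widetilde A\rbra*{\beta\widetilde B - B} + \rbra*{\alpha\widetilde A - A}B$ with $\Abs*{\widetilde A}\leq 1$. The one point worth flagging is that Definition \ref{def:block-encoding} alone gives only $\Abs*{B} \leq \beta + \varepsilon$ rather than $\Abs*{B} \leq \beta$, so a pedantic reading yields error $\alpha\varepsilon + \beta\delta + \delta\varepsilon$; this is the same implicit sub-normalization convention used in the cited source, so your proof is faithful to the lemma as stated.
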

    
    \subsubsection{Density Operators}

    We describe mixed quantum states as density operators, and introduce how unitary operators prepare purifications of density operators. 

    \begin{definition} [Preparation of density operators] \label{def:subnormalized-density-operator}
        A density operator $\rho$ is a positive semidefinite operator with $\tr(\rho) = 1$. 
        An $(n+a)$-qubit unitary operator $U$ is said to prepare an $n$-qubit density operator $\rho$, if it prepares a purification $\ket\rho = U \ket{0}_{n+a}$ of $\rho$ such that $\rho = \tr_a(\ket{\rho}\bra{\rho})$.
    \end{definition}

    The following theorem shows how to construct a unitary block-encoding of density operators, also known as the technique of purified density matrix \cite{LC19}.
    That is, if we are given a unitary operator that prepares a mixed quantum state $\rho$, then we can implement a block-encoding of $\rho$. 
    
    \begin{theorem} [Block-encoding of density operators, {\cite[Lemma 25]{GSLW19}}] \label{thm:density-to-block-encoding}
        Suppose $U$ is an $\rbra*{n+a}$-qubit unitary operator that prepares
        a purification of an $n$-qubit density operator $\rho$.
        Then, we can implement a $\rbra*{2n+a}$-qubit unitary operator $\widetilde U$ using $1$ query to each of $U$ and $U^\dag$ such that $\widetilde U$ is a $\rbra*{1, n+a, 0}$-block-encoding of $\rho$. 
    \end{theorem}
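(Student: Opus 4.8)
The plan is to give the standard ``purified density matrix'' construction and then verify the block-encoding identity by a direct calculation. I would introduce three registers: a main register $M$ of $n$ qubits, on which the encoded operator $\rho$ is to act, together with two ancilla registers $B$ (of $n$ qubits) and $C$ (of $a$ qubits), so that the total number of qubits is $2n+a$ and the ancilla comprises $n+a$ qubits. Since $U$ is an $\rbra*{n+a}$-qubit unitary preparing $\ket{\rho} = U\ket{0}_{n+a}$ with $\rho = \tr_a\rbra*{\ket{\rho}\bra{\rho}}$, I regard $U$ as acting on $B \otimes C$, where $B$ holds the ``system'' part of the purification and $C$ the part that is traced out. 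I would then define
\[
    \widetilde U = \rbra*{U^\dagger \otimes I_M}\rbra*{\mathrm{SWAP}_{B,M}\otimes I_C}\rbra*{U\otimes I_M},
\]
where $\mathrm{SWAP}_{B,M}$ exchanges the $n$-qubit registers $B$ and $M$. This uses exactly one query to $U$ and one to $U^\dagger$, since the intervening SWAP is a fixed circuit of $n$ elementary gates and issues no oracle call.

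To verify that $\widetilde U$ is a $\rbra*{1, n+a, 0}$-block-encoding of $\rho$, I would compute the operator $\prescript{}{BC}{\bra{0}}\widetilde U \ket{0}_{BC}$ on $M$ and show it equals $\rho$ exactly. Using $U\ket{0}_{BC} = \ket{\rho}_{BC}$, and hence $\prescript{}{BC}{\bra{0}}U^\dagger = \prescript{}{BC}{\bra{\rho}}$, this reduces to $\prescript{}{BC}{\bra{\rho}}\rbra*{\mathrm{SWAP}_{B,M}\otimes I_C}\ket{\rho}_{BC}$. Expanding through the Schmidt decomposition $\ket{\rho}_{BC} = \sum_k \sqrt{\lambda_k}\ket{e_k}_B\ket{f_k}_C$, where $\rho = \sum_k \lambda_k \ket{e_k}\bra{e_k}$ is the eigendecomposition and $\cbra*{\ket{f_k}}$ is orthonormal in $C$, the SWAP moves the $\ket{e_k}_B$ factor into $M$ and an arbitrary input into $B$; the orthonormality $\braket{f_j}{f_k} = \delta_{jk}$ then collapses the double sum, leaving $\sum_k \lambda_k \ket{e_k}\bra{e_k} = \rho$ on $M$. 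Hence $\prescript{}{BC}{\bra{0}}\widetilde U \ket{0}_{BC} = \rho$ with no error.

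Finally I would read off the parameters: the encoding is exact, so $\varepsilon = 0$; there is no subnormalization factor, so $\alpha = 1$; and the number of ancilla qubits is $n+a$. Thus $\widetilde U$ is a $\rbra*{1, n+a, 0}$-block-encoding of $\rho$ on $2n+a$ qubits, using one query each to $U$ and $U^\dagger$, as claimed. I do not expect a serious obstacle here, as the calculation is elementary once the registers are fixed. The only point requiring care is the bookkeeping of which register plays which role, namely ensuring that the traced-out purifying register $C$ is left untouched by the SWAP, so that its orthonormal states produce the Kronecker delta that annihilates the cross terms, and confirming that the SWAP consumes no oracle query so that the stated count of one call to each of $U$ and $U^\dagger$ is met.
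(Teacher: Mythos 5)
Your construction and verification are correct, and they coincide with the standard purified-density-matrix argument of Lemma 25 in the cited reference \cite{GSLW19}; the paper itself states this result without proof, deferring entirely to that citation. The register bookkeeping, the Schmidt-decomposition calculation showing $\prescript{}{BC}{\bra{0}}\widetilde U\ket{0}_{BC}=\rho$ exactly, and the query count of one call to each of $U$ and $U^\dagger$ are all as required.
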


    The Hadamard test \cite{AJL09} is often used to estimate the value of $\bra{\psi}U\ket{\psi}$ for unitary operator $U$ and quantum state $\ket{\psi}$. In the following, we will introduce a generalized version of Hadamard test that can estimate the value of $\tr\rbra*{A\rho}$ if $A$ is given as block-encoded in unitary operator $U$ and $\rho$ is mixed quantum states. 
    
    \begin{theorem} [Hadamard test, {\cite[Lemma 9]{GP22}}] \label{thm:hadamard-test}
        Suppose $U$ is an $\rbra*{n+a}$-qubit unitary operator that is a $\rbra*{1, a, 0}$-block-encoding of $A$. We can implement a quantum circuit using $1$ query to $U$ and $O\rbra*{1}$ elementary quantum gates such that it outputs $0$ with probability $\frac{1 + \Real\rbra*{\tr\rbra*{A\rho}}}{2}$ (resp. $\frac{1 + \Imag\rbra*{\tr\rbra*{A\rho}}}{2}$) on input $n$-qubit quantum state $\rho$.
    \end{theorem}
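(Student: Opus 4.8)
The plan is to reduce the statement to the textbook Hadamard test on a pure input and then extend to a mixed $\rho$ by linearity of the output probability in the input density operator. First I would recall the elementary fact that for any unitary $V$ and any pure state $\ket{\chi}$, the circuit that Hadamards a fresh control qubit, applies a controlled-$V$, Hadamards the control again, and measures it, outputs $0$ with probability $\frac{1 + \Real\rbra*{\bra{\chi} V \ket{\chi}}}{2}$; inserting a phase gate $S^\dag$ on the control between the controlled-$V$ and the second Hadamard replaces the real part by the imaginary part. This is a one-line amplitude computation: after both Hadamards the $\ket{0}$-branch of the control carries the (subnormalized) amplitude $\frac{1}{2}\rbra*{\ket{\chi} + V \ket{\chi}}$, whose squared norm is $\frac{1 + \Real\rbra*{\bra{\chi} V \ket{\chi}}}{2}$, while the phase gate multiplies the $V\ket{\chi}$ branch by $-i$ so that the cross term picks out $\Imag\rbra*{\bra{\chi} V \ket{\chi}}$ instead.

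Next I would specialize to $V = U$, acting jointly on the $a$ block-encoding ancillas and the $n$ state qubits, with control-conditioned input $\ket{0}_a \otimes \ket{\psi}$ for a pure $\ket{\psi}$ on the $n$ state qubits. Since $U$ is a $(1, a, 0)$-block-encoding of $A$, Definition \ref{def:block-encoding} gives $\prescript{}{a}{\bra 0} U \ket{0}_a = A$ exactly, so projecting the ancillas onto $\ket{0}_a$ extracts $A$ and hence $\bra{\chi} U \ket{\chi} = \bra{\psi} \bigl(\prescript{}{a}{\bra 0} U \ket{0}_a\bigr) \ket{\psi} = \bra{\psi} A \ket{\psi}$. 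Thus on pure input $\ket{\psi}$ the circuit outputs $0$ with probability $\frac{1 + \Real\rbra*{\bra{\psi} A \ket{\psi}}}{2}$, and with the imaginary part when the phase gate is inserted.

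Finally I would pass to mixed $\rho$. Writing any convex decomposition $\rho = \sum_i p_i \ket{\psi_i}\bra{\psi_i}$ and using that the output probability of a fixed circuit is linear in the input state, the probability of outcome $0$ becomes $\sum_i p_i \frac{1 + \Real\rbra*{\bra{\psi_i} A \ket{\psi_i}}}{2} = \frac{1 + \Real\rbra*{\tr\rbra*{A\rho}}}{2}$, using $\sum_i p_i \bra{\psi_i} A \ket{\psi_i} = \tr\rbra*{A\rho}$; the imaginary case is identical. For the cost, the circuit uses a single (controlled-)$U$, which is one query under our convention, plus two Hadamards and one phase gate, i.e.\ $O\rbra*{1}$ elementary gates.

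I expect no real obstacle; the only points needing care are identifying which registers $U$ acts on (the ancillas together with the $n$ state qubits, not any purification register) so that $\prescript{}{a}{\bra 0} U \ket{0}_a = A$ is applied to the ancillas alone, and fixing the sign convention of the phase gate so that the imaginary part carries the stated sign. An equivalent and perhaps cleaner route replaces the convex decomposition by a purification $\ket{\rho}$ of $\rho$: running the pure-state Hadamard test on $\ket{0}_a\ket{\rho}$ and using $\tr\rbra*{\rbra*{A \otimes I}\ket{\rho}\bra{\rho}} = \tr\rbra*{A\rho}$ yields the same probabilities.
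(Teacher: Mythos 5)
Your proof is correct; the paper itself gives no proof of this statement, importing it directly from \cite[Lemma 9]{GP22}, and your argument (standard Hadamard test on the joint ancilla--state register, extraction of $A$ via $\prescript{}{a}{\bra 0} U \ket 0_a = A$, then linearity in $\rho$) is exactly the standard derivation used there. The cost accounting and the $S^\dag$ variant for the imaginary part are also handled correctly, so nothing further is needed.
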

    
    By Theorem \ref{thm:hadamard-test}, we can estimate the value of $\tr\rbra*{A\rho}$ within additive error $\varepsilon$ with probability $1 - \delta$ using $O\rbra*{\frac{\log\rbra*{1/\delta}}{\varepsilon^2}}$ samples of $\rho$ and $O\rbra*{\frac{\log\rbra*{1/\delta}}{\varepsilon^2}}$ queries to $U$.
    
    \subsection{Quantum Singular Value Transformation}
    
    Quantum singular value transformation (QSVT) \cite{GSLW19} is a powerful toolbox of quantum computing. 
    Let $f \colon \mathbb{R} \to \mathbb{C}$ be an odd function, i.e., $f\rbra*{x} = -f\rbra*{-x}$. For every operator $A$ with singular value decomposition $A = W \Sigma V^\dag$, where $W$ and $V$ are unitary operators and $\Sigma$ is diagonal with non-negative eigenvalues, define $f^{\SV}\rbra*{A} = W f\rbra*{\Sigma} V^\dag$ as the singular value transformation. 
    In the following, we introduce a special version of QSVT that we need. 
    
    \begin{theorem} [Singular value transformation, {Lemma 19 of the full version of \cite{GSLW19}}] \label{thm:qsvt}
        Suppose 
        \begin{enumerate}
            \item $p \in \mathbb{R}\sbra*{x}$ is an odd polynomial of degree $d$ with $\Abs*{p\rbra*{x}}_{\sbra*{-1, 1}} \leq 1$. 
            \item Unitary operator $U$ is a $\rbra*{1, a, 0}$-block-encoding of operator $A$.
        \end{enumerate}
        Then, we can implement a unitary operator $\widetilde U$ using $\gamma d = O\rbra*{d}$ queries to $U$ for some constant $\gamma > 0$ and $O\rbra*{ad}$ elementary quantum gates such that $\widetilde U$ is a $\rbra*{1, O\rbra*{a}, 0}$-block-encoding of $p^{\SV}\rbra*{A}$. 
    \end{theorem}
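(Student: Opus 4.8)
The plan is to follow the standard quantum singular value transformation construction built on quantum signal processing (QSP). First I would fix the singular value decomposition $A = W \Sigma V^\dag$ and record that the block-encoding hypothesis means $\rbra*{\bra{0}_a \otimes I} U \rbra*{\ket{0}_a \otimes I} = A$, i.e.\ $\Pi U \Pi = A \oplus 0$ with respect to the projector $\Pi = \ket{0}_a\bra{0}_a \otimes I$ onto the encoded subspace. The elementary building blocks are the projector-controlled phase-shift operators $\Pi_\phi = e^{i\phi\rbra*{2\Pi - I}}$, which act as multiplication by $e^{i\phi}$ on the range of $\Pi$ and by $e^{-i\phi}$ on its orthogonal complement. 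Since $p$ is odd its degree $d$ is odd, so I would define the alternating phase-modulation sequence
\begin{equation}
    U_\Phi = \Pi_{\phi_1} U \prod_{k=1}^{(d-1)/2} \rbra*{ \Pi_{\phi_{2k}} U^\dag \Pi_{\phi_{2k+1}} U },
\end{equation}
depending on a phase vector $\Phi = \rbra*{\phi_1, \ldots, \phi_d}$ to be chosen.

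The second step is the invariant-subspace reduction (Jordan's lemma). The singular vectors partition the full Hilbert space into $U$-invariant subspaces of dimension at most two: each singular value $\sigma_i = \cos\theta_i$ pairs a right singular direction inside $\spanspace\Pi$ with the corresponding left singular direction, and on this $2 \times 2$ block $U$ acts as a fixed rotation determined by $\theta_i$ while $\Pi_\phi$ acts as a $z$-rotation by $\phi$. Consequently $U_\Phi$ restricted to each block collapses to exactly the scalar QSP product of $2 \times 2$ matrices, whose relevant matrix entry is a degree-$d$ polynomial in $\sigma_i$. This is precisely what reduces the operator-level target $\Pi U_\Phi \Pi = p^{\SV}\rbra*{A}$ to a statement about a single scalar variable $\sigma_i \in \sbra*{0, 1}$, uniformly across all blocks.

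The crux is the QSP existence theorem: for any real polynomial of degree $d$ with the parity of $d$ and $\Abs*{p\rbra*{x}}_{\sbra*{-1, 1}} \leq 1$, there is a phase vector $\Phi$ for which the scalar QSP product realizes $p$ on every block. I would establish this by first constructing a complementary polynomial $q$ satisfying $\abs*{p\rbra*{x}}^2 + \rbra*{1 - x^2}\abs*{q\rbra*{x}}^2 = 1$ on $\sbra*{-1, 1}$ (via a Fej\'er--Riesz / root-completion argument applied to $1 - p^2$), and then extracting the phases by an inductive degree-halving factorization of the associated $SU(2)$-valued Laurent polynomial. This existence-and-construction argument is the main obstacle: the operator-algebra and subspace bookkeeping are routine, but proving that a given bounded odd polynomial admits a valid phase sequence requires the delicate complementary-polynomial and factoring step, which is the analytic heart of \cite{GSLW19}.

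Finally I would account for resources. The sequence contains $d$ applications of $U$ or $U^\dag$, giving $\gamma d = O\rbra*{d}$ queries for an absolute constant $\gamma > 0$. Each $\Pi_\phi$ is a phase gate conditioned on the $a$ ancilla qubits being in state $\ket{0}_a$, implementable with $O\rbra*{a}$ elementary gates using one extra qubit, so the $d$ phase operators contribute $O\rbra*{ad}$ gates in total, while the overall ancilla register stays of size $O\rbra*{a}$. Since the input block-encoding is exact ($\varepsilon = 0$) and the reduction to scalar QSP is exact on each invariant block, no approximation analysis is needed, and $U_\Phi$ is an exact $\rbra*{1, O\rbra*{a}, 0}$-block-encoding of $p^{\SV}\rbra*{A}$.
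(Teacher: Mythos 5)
First, note that the paper does not actually prove this statement: it is imported verbatim as Lemma~19 of the full version of \cite{GSLW19}, so there is no in-paper proof to compare against. Your sketch is a faithful reconstruction of the argument in the cited source: the alternating phase-modulated sequence of $U$, $U^\dag$ and projector-controlled phase shifts, the reduction via Jordan's lemma to $2\times 2$ invariant blocks on which the sequence collapses to scalar quantum signal processing, the complementary-polynomial existence step, and the $O\rbra*{ad}$ gate count coming from $d$ applications of an $a$-controlled phase operator are all the right ingredients.

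There is, however, one genuine gap in the middle step. For a \emph{real} bounded odd polynomial $p$, a single phase vector $\Phi$ generally cannot realize $p$ directly as the relevant QSP matrix entry: the factorization you invoke requires a complementary $q$ with $\abs*{p\rbra*{x}}^2 + \rbra*{1-x^2}\abs*{q\rbra*{x}}^2 = 1$ on $\sbra*{-1,1}$, and with $p$ real this forces $\rbra*{1-x^2}$ to divide $1-p\rbra*{x}^2$, i.e.\ $p\rbra*{\pm 1}^2 = 1$, which fails for generic $p$ with $\Abs*{p\rbra*{x}}_{\sbra*{-1,1}} \leq 1$ (e.g.\ $p\rbra*{x} = x/2$). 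The resolution in \cite{GSLW19} is to realize $p$ only as the real part of a complex QSP polynomial $P$ and then form $\rbra*{U_\Phi + U_{-\Phi}}/2$ by a linear combination of unitaries with one extra control qubit (their Corollary~18 on singular value transformation by real polynomials). This costs only a constant-factor overhead in queries and one additional ancilla, which is precisely why the statement is phrased with $\gamma d$ queries for an unspecified constant $\gamma > 0$ and $O\rbra*{a}$ rather than exactly $a$ ancilla qubits; your resource accounting survives unchanged once this symmetrization step is inserted.
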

    
    Using QSVT, we can approximately perform the sign function.
    This is achieved by the polynomial approximation of the sign function, stated as follows. 
    
    \begin{theorem} [Approximation of the sign function, {\cite[Lemma 14]{GSLW19}}] \label{thm:sgn}
        For $\delta > 0$ and $\varepsilon \in \rbra*{0, 1/2}$, there is an odd polynomial $p \in \mathbb{R}\sbra*{x}$ of degree $d \leq \frac{\eta\log\rbra*{1/\varepsilon}}{\delta}$ for some constant $\eta > 0$ such that
        \begin{enumerate}
            \item $\abs*{p\rbra*{x}} \leq 1$ for all $x \in \sbra*{-2, 2}$.
            \item $\abs*{p\rbra*{x} - \sgn\rbra*{x}} \leq \varepsilon$ for all $x \in \sbra*{-2, 2} \setminus \rbra*{-\delta, \delta}$.
        \end{enumerate}
    \end{theorem}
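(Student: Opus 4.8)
The plan is to construct $p$ in two stages: first replace $\sgn$ by a smooth function with a tunable ``sharpness'' parameter, namely the scaled error function $\operatorname{erf}(kx) = \tfrac{2}{\sqrt\pi}\int_0^{kx} e^{-t^2}\,dt$, and then approximate that smooth function by a polynomial of the claimed degree. The guiding facts are that $\operatorname{erf}(kx) \to \sgn(x)$ pointwise as $k \to \infty$, so that for $x$ bounded away from $0$ a moderately large $k$ already yields a good approximation, while the analyticity and rapid decay of the Gaussian make $\operatorname{erf}$ cheap to approximate by low-degree polynomials.

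For the first stage I would fix $k$ using a Gaussian tail bound. Since $1 - \operatorname{erf}(y) = \operatorname{erfc}(y) \le e^{-y^2}$ for $y \ge 0$, on the region $\abs{x} \ge \delta$ one has $\abs{\operatorname{erf}(kx) - \sgn(x)} = \operatorname{erfc}\rbra{k\abs{x}} \le \operatorname{erfc}(k\delta) \le e^{-(k\delta)^2}$. Demanding this be at most $\varepsilon/2$ forces $k = \Theta\rbra{\sqrt{\log(1/\varepsilon)}/\delta}$, which is precisely the choice that will make the final degree linear, rather than square-root, in $\log(1/\varepsilon)$.

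For the second stage I would approximate $\operatorname{erf}(kx)$ on $\sbra{-2,2}$ by writing $\operatorname{erf}(kx) = \tfrac{2k}{\sqrt\pi}\int_0^x e^{-k^2 t^2}\,dt$ and replacing the Gaussian integrand by a truncated Chebyshev (equivalently Hermite / Jacobi--Anger) expansion of $e^{-k^2 t^2}$. Because the Gaussian has characteristic width $1/k$, a polynomial of degree $O\rbra{k\sqrt{\log(1/\varepsilon)}}$ approximates it uniformly to error $O(\varepsilon)$; integrating term by term produces an odd polynomial $q$ with $\abs{q(x) - \operatorname{erf}(kx)} \le \varepsilon/4$ on $\sbra{-2,2}$ and degree $O\rbra{k\sqrt{\log(1/\varepsilon)}} = O\rbra{\log(1/\varepsilon)/\delta}$, yielding the claimed bound $d \le \eta\log(1/\varepsilon)/\delta$. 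Combining the two stages by the triangle inequality then establishes item (2).

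The main obstacle I anticipate is item (1), the \emph{global} bound $\abs{p(x)} \le 1$ on all of $\sbra{-2,2}$, as opposed to mere near-correctness away from the origin; it is this uniform bound, not the pointwise accuracy, that QSVT (Theorem \ref{thm:qsvt}) actually requires. The difficulty is that the target satisfies $\abs{\operatorname{erf}(kx)} \le 1$, but the approximant only satisfies $\abs{q(x)} \le 1 + \varepsilon/4$, so $q$ may slightly overshoot. I would resolve this by the standard rescaling trick: set $p = (1 - \varepsilon/4)\,q$ (or absorb the damping factor directly into the construction), which pushes the maximum modulus below $1$ on $\sbra{-2,2}$ while worsening the approximation on $\abs{x} \ge \delta$ by only an additive $O(\varepsilon)$; a final adjustment of the constants in $k$ and in the truncation degree then recovers items (1) and (2) simultaneously. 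Oddness of $p$ is automatic, since the Gaussian is even, its polynomial approximation may be taken even, and integrating an even polynomial and rescaling both preserve oddness.
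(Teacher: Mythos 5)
Your construction is essentially the proof of the cited result (\cite[Lemma 14]{GSLW19}), which this paper imports as a black box without reproving it: approximate $\sgn\rbra*{x}$ by $\operatorname{erf}\rbra*{kx}$ with $k = \Theta\rbra*{\sqrt{\log\rbra*{1/\varepsilon}}/\delta}$ via the Gaussian tail bound, approximate the Gaussian derivative by a truncated Chebyshev expansion and integrate term by term, and rescale to enforce the global bound $\abs*{p\rbra*{x}} \leq 1$. The sketch is correct, including the degree bookkeeping $O\rbra*{k\sqrt{\log\rbra*{1/\varepsilon}}} = O\rbra*{\log\rbra*{1/\varepsilon}/\delta}$ and the parity argument, so there is nothing to fix.
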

    
    \subsection{Sampling to Block-Encoding}
    
    Let $\mathcal{D}\rbra*{\mathcal{H}}$ denote the set of density operators on Hilbert space $\mathcal{H}$. For every quantum operator $A$ on Hilbert space $\mathcal{H}$, we define the trace norm of $A$ as $\Abs*{A}_{\tr} = \tr\rbra*{\sqrt{A^\dag A}}$.
    Let $\mathcal{E} \colon \mathcal{D}\rbra*{\mathcal{H}_1} \to \mathcal{D}\rbra*{\mathcal{H}_2}$ be a super-operator (i.e., quantum channel) from Hilbert space $\mathcal{H}_1$ to $\mathcal{H}_2$. The diamond norm of $\mathcal{E}$ is defined by
    \begin{equation}
        \Abs*{\mathcal{E}}_{\diamond} = \max_{\sigma \in \mathcal{D}\rbra*{\mathcal{H}_1^{\otimes 2}} \colon \Abs*{\sigma}_{\tr} \leq 1} \Abs{\rbra*{\mathcal{E} \otimes \mathcal{I}} \rbra*{\sigma}}_{\tr},
    \end{equation}
    where $\mathcal{I} \colon \mathcal{D}\rbra*{\mathcal{H}_1} \to \mathcal{D}\rbra*{\mathcal{H}_1}$ is the identity map on $\mathcal{D}\rbra*{\mathcal{H}_1}$. 

    In this paper, we use quantum channels to approximately implement unitary operators $U$ when dealing with sample access. 
    For this purpose, we introduce the circuit implementations of quantum channels for describing the approximate implementation of $U^\dag$. 
    
    \begin{definition} [Circuit implementations of quantum channels and invertibility]
    \label{def:circuit-impl-qchannel}
        Let $W$ be a quantum unitary circuit and $\rho$ be a mixed quantum state.
        A quantum channel $\mathcal{E}$ is said to be implemented by $\rbra{W, \rho}$, if for every mixed quantum state $\sigma$, $\mathcal{E}\rbra{\sigma} = \tr_{\textup{env}}\rbra{ W \rbra{ {\rho}_{\textup{env}} \otimes \sigma} W^\dag }$.
        We also call the pair $\rbra{W, \rho}$ a circuit implementation of $\mathcal{E}$. 
        
        For $\delta \geq 0$, the circuit implementation $\rbra{W, \rho}$ of a quantum channel $\mathcal{E}$ is said to be $\delta$-invertible with respect to a unitary operator $U$, if $\Abs{\mathcal{E} - U}_\diamond \leq \delta$ and $\Abs{\mathcal{E}^{\textup{inv}} - U^\dag}_\diamond \leq \delta$, where $\mathcal{E}^{\textup{inv}}$ is a quantum channel implemented by $\rbra{W^\dag, \rho}$. 
    \end{definition}

    We show that the composition of invertible circuit implementations of quantum channels is still invertible as follows. 
    \begin{lemma} [Composition of invertible circuit implementations]
    \quad
    \label{lemma:invertible-channel}
    \begin{enumerate}
        \item $\rbra{W \otimes I_{\textup{env}}, \rho_{\textup{env}}}$ is $0$-invertible with respect to $W$ for every unitary operator $W$ and mixed quantum state $\rho$. 
        \item Suppose $\rbra{W_j, \rho_j}$ is $\delta_j$-invertible with respect to $U_j$ for $j = 1, 2$. Then, $\rbra{\rbra{W_1 \otimes I_2} \rbra{W_2 \otimes I_1}, \rho_1 \otimes \rho_2}$ is $\rbra{\delta_1+\delta_2}$-invertible with respect to $U_1U_2$, where $I_j$ is the identity operator on the subspace of $\rho_j$.
    \end{enumerate}
    \end{lemma}
    \begin{proof}
        Item 1) is trivial. To see item 2), suppose that for $j = 1, 2$, $\rbra{W_j, \rho_j}$ implements a a quantum channel $\mathcal{E}_j$ with $\Abs{\mathcal{E}_j - U_j}_\diamond \leq \delta_j$ and $\Abs{\mathcal{E}_j^{\textup{inv}} - U_j^\dag}_\diamond \leq \delta_j$.
        Note that $\rbra{\rbra{W_1 \otimes I_2} \rbra{W_2 \otimes I_1}, \rho_1 \otimes \rho_2}$ implements $\mathcal{E}_1 \circ \mathcal{E}_2$, and $\rbra{\rbra{W_2^\dag \otimes I_1} \rbra{W_1^\dag \otimes I_2}, \rho_1 \otimes \rho_2}$ implements $\rbra{\mathcal{E}_1 \circ \mathcal{E}_2}^{\textup{inv}} = \mathcal{E}_2^{\textup{inv}} \circ \mathcal{E}_1^{\textup{inv}}$.
        Then, $\Abs{\mathcal{E}_1 \circ \mathcal{E}_2 - U_1 \cdot U_2}_\diamond \leq \Abs{\mathcal{E}_1 - U_1}_\diamond + \Abs{\mathcal{E}_2 - U_2}_\diamond \leq \delta_1+\delta_2$ and $\Abs{\rbra{\mathcal{E}_1 \circ \mathcal{E}_2}^{\textup{inv}} - \rbra{U_1 \cdot U_2}^\dag}_\diamond = \Abs{\mathcal{E}_2^{\textup{inv}} \circ \mathcal{E}_1^{\textup{inv}} - U_2^\dag \cdot U_1^\dag}_\diamond \leq \Abs{\mathcal{E}_2^{\textup{inv}} - U_2^\dag}_\diamond + \Abs{\mathcal{E}_1^{\textup{inv}} - U_1^\dag}_\diamond \leq \delta_2+\delta_1$. 
        Therefore, $\rbra{\rbra{W_1 \otimes I_2} \rbra{W_2 \otimes I_1}, \rho_1 \otimes \rho_2}$ is $\rbra{\delta_1+\delta_2}$-invertible with respect to $U_1U_2$. 
    \end{proof}

    The following is the technique of density matrix exponentiation, also known as the sample-based Hamiltonian simulation. 

    \begin{theorem} [Density matrix exponentiation, adapted from \cite{LMR14,KLL+17}]
    \label{thm:den-mat-exp}
    There exists a circuit implementation $\rbra{W, \rho^{\otimes k}}$ that is $\delta$-invertible with respect to (controlled-)$e^{-i\rho t}$, where $k = O\rbra{t^2/\delta}$. 
    \end{theorem}

    In order to modify our quantum algorithm with purified access, we need to construct unitary block-encodings by identical copies of quantum states. 
    This can be done by the technique developed in \cite{GP22} based on density matrix exponentiation \cite{LMR14,KLL+17}. 
    
    \begin{theorem} [Sampling to block-encoding {\cite[Corollary 21]{GP22}}] \label{thm:sample-to-block-encoding}
        Let $\rho$ be an $n$-qubit mixed quantum state.
        There exists a quantum channel $\mathcal{E}$ implemented by $\rbra{W, \rho^{\otimes k} \otimes \ket{0}_\ell\bra{0}}$ such that 
        $\mathcal{E}$ is $\delta$-close to (controlled-)$U$ in the diamond norm,
        where $U$ is a $\rbra*{4/\pi, 3, 0}$-block-encoding of $\rho$, $k = O\rbra*{\frac{\rbra*{\log\rbra*{1/\delta}}^2}{\delta}}$ and $\ell = O\rbra{1}$.
        Moreover, $W$ consists of $O\rbra*{n \cdot \frac{\rbra*{\log\rbra*{1/\delta}}^2}{\delta}}$ elementary quantum gates.
    \end{theorem}

    \begin{lemma}
    \label{lemma:block-encoding-invertible}
        In Theorem \ref{thm:sample-to-block-encoding}, $\rbra{W, \rho^{\otimes k} \otimes \ket{0}_\ell\bra{0}}$ is $\delta$-invertible with respect to $U$.
    \end{lemma}
    \begin{proof} 
        According to \cite{GP22}, $\rbra{W, \rho^{\otimes k} \otimes \ket{0}_\ell\bra{0}}$ is constructed by QSVT techniques and uses queries to $e^{i\rho}$ (as well as its inverse and controlled versions), where $e^{i\rho}$ is approximately implemented by Theorem \ref{thm:den-mat-exp}. 
        As $\rbra{W, \rho^{\otimes k} \otimes \ket{0}_\ell\bra{0}}$ is only composed of unitary operators (by QSVT) and invertible circuit implementations of $e^{i\rho}$ (by density matrix exponentiation), it is therefore $\delta$-invertible with respect to $U$ by Lemma \ref{lemma:invertible-channel}. 
    \end{proof}
    
    \section{The Algorithm} \label{sec:algo}
    
    In this section, we will first provide a quantum algorithm for low-rank trace distance estimation with purified access; and then modify it to another algorithm with sample access. 
    The algorithms will be written in a general form (see Theorem \ref{thm:purified} and Theorem \ref{thm:sample}) using the notions introduced for approximate rank in Section \ref{sec:approx-rank}, and low-rank trace distance estimation will be considered to be their corollaries (see Corollary \ref{corollary:purified} and Corollary \ref{corollary:sample}).
    
    \subsection{Purified Access}
    
    In the purified quantum query access model, mixed quantum state $\rho$ is given by a unitary operator $O_\rho$ that prepares its purification. That is, 
    \begin{equation}
        O_\rho \ket{0}_{n+n_\rho} = \ket{\rho}_{n+n_\rho}, 
    \end{equation}
    \begin{equation}
        \rho = \tr_{n_\rho} \rbra*{\ket{\rho}_{n+n_\rho} \bra{\rho}},
    \end{equation}
    where $n_\rho$ is the number of ancilla qubits and we usually assume that $n_\rho \leq n$. 
    Before we state the main theorem, let us recall that $w\rbra{A, \delta}$ denotes the sum of absolute eigenvalues of $A$ that are not greater than $\delta$ (see Eq. (\ref{eq:w})).
    \begin{theorem} \label{thm:purified}
        Given quantum oracles $O_\rho$ and $O_\sigma$ that prepare $N$-dimensional quantum states $\rho$ and $\sigma$, respectively, for every $\delta_p > 0$ such that 
        \begin{equation}
        w\rbra*{\frac{\rho - \sigma}{2}, \delta_p} \leq \frac{\varepsilon}{4}, 
        \end{equation}
        with $w\rbra{\cdot, \cdot}$ defined in Eq. (\ref{eq:w}),
        there is a quantum algorithm that computes the trace distance $T\rbra*{\rho, \sigma}$ within additive error $\varepsilon$ using 
        \begin{equation}
        O\rbra*{\frac{1}{\delta_p\varepsilon} \log\rbra*{\frac{1}{\varepsilon}}}    
        \end{equation}
        queries to these oracles 
        and 
        \begin{equation}
        O\rbra*{\frac{1}{\delta_p\varepsilon} \log\rbra*{\frac{1}{\varepsilon}}\log\rbra*{N}}
        \end{equation}
        elementary quantum gates. 
    \end{theorem}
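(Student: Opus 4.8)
The plan is to turn the identity \eqref{eq:trace-distance-sgn} into an algorithm, estimating the two traces $\tr\rbra*{\rho\,\sgn^{\SV}\rbra*{\nu_-}}$ and $\tr\rbra*{\sigma\,\sgn^{\SV}\rbra*{\nu_-}}$ separately and outputting half their difference, where $\nu_- = \rbra*{\rho-\sigma}/2$. First I would record why the identity holds: since $\nu_-$ is Hermitian, writing $\nu_- = \sum_j \lambda_j\ket{\psi_j}\bra{\psi_j}$ one checks directly (splitting into the cases $\lambda_j\gtrless 0$) that for any real odd $f$, in particular for $\sgn$ and for the polynomials below, $f^{\SV}\rbra*{\nu_-} = \sum_j f\rbra*{\lambda_j}\ket{\psi_j}\bra{\psi_j}$. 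Hence $\frac12\rbra*{\tr\rbra*{\rho\,\sgn^{\SV}\rbra*{\nu_-}} - \tr\rbra*{\sigma\,\sgn^{\SV}\rbra*{\nu_-}}} = \tr\rbra*{\nu_-\sgn^{\SV}\rbra*{\nu_-}} = \sum_j\abs*{\lambda_j} = T\rbra*{\rho,\sigma}$.

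The block-encoding is then built in layers, each a direct citation. Theorem \ref{thm:density-to-block-encoding} converts $O_\rho$ and $O_\sigma$ into exact $\rbra*{1,\cdot,0}$-block-encodings of $\rho$ and $\sigma$ using $O\rbra*{1}$ queries each. The LCU routine of Theorem \ref{thm:lcu}, applied with the coefficient vector $\rbra*{1/2,-1/2}$ (of unit $\ell_1$-norm, with an exact two-gate state-preparation pair), gives an exact $\rbra*{1,\cdot,0}$-block-encoding of $\nu_-$. As $\nu_-$ is Hermitian with eigenvalues in $\sbra*{-1/2,1/2}$, I would apply the QSVT of Theorem \ref{thm:qsvt} with the degree-$d$ odd polynomial $p$ of Theorem \ref{thm:sgn} approximating $\sgn$ to error $\varepsilon'$ outside $\rbra*{-\delta_p,\delta_p}$, obtaining an exact $\rbra*{1,O\rbra*{\log N},0}$-block-encoding $\widetilde U$ of $p^{\SV}\rbra*{\nu_-} = p\rbra*{\nu_-}$, where $d = O\rbra*{\log\rbra*{1/\varepsilon'}/\delta_p}$ and one application of $\widetilde U$ costs $O\rbra*{d}$ queries to $O_\rho,O_\sigma$. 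Feeding $\widetilde U$ into the Hadamard test of Theorem \ref{thm:hadamard-test} on input $\rho$ produces a bit equal to $0$ with probability $\rbra*{1+\tr\rbra*{\rho\,p\rbra*{\nu_-}}}/2$; purifying this mixed input with $O_\rho$ makes the whole test a coherent unitary, so amplitude estimation (Theorem \ref{thm:amp-estimation}) estimates $\tr\rbra*{\rho\,p\rbra*{\nu_-}}$ within additive error $\varepsilon''$ using $O\rbra*{1/\varepsilon''}$ queries, and likewise for $\sigma$.

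The error analysis is the step I expect to need the most care, and it is exactly where the hypothesis on $w$ enters. The total error splits into a systematic part $\abs*{T\rbra*{\rho,\sigma} - \tr\rbra*{\nu_-p\rbra*{\nu_-}}}$ and a statistical part from the two estimations. For the systematic part I would split the spectrum at $\delta_p$: on indices with $\abs*{\lambda_j} > \delta_p$ the guarantee $\abs*{p\rbra*{\lambda_j} - \sgn\rbra*{\lambda_j}} \leq \varepsilon'$ contributes at most $\varepsilon'\,T\rbra*{\rho,\sigma} \leq \varepsilon'$, while on indices with $\abs*{\lambda_j}\leq\delta_p$ the crude bound $\abs*{p\rbra*{\lambda_j}} + \abs*{\sgn\rbra*{\lambda_j}}\leq 2$ contributes at most $2\,w\rbra*{\nu_-,\delta_p}\leq\varepsilon/2$ by hypothesis. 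Choosing $\varepsilon'$ and $\varepsilon''$ to be suitable constant multiples of $\varepsilon$ then makes the overall error at most $\varepsilon$, the amplitude-estimation success probability $8/\pi^2$ being already a good constant (boostable by a median if desired).

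For the complexity, one Hadamard test costs $O\rbra*{d}$ oracle queries, and amplitude estimation multiplies this by $O\rbra*{1/\varepsilon''} = O\rbra*{1/\varepsilon}$, giving $O\rbra*{d/\varepsilon} = O\rbra*{\frac{1}{\delta_p\varepsilon}\log\rbra*{1/\varepsilon}}$ queries to $O_\rho,O_\sigma$ overall, while the $O\rbra*{\log N}$ ancilla count of the block-encodings contributes the extra $\log N$ factor in the gate count. The main obstacle is conceptual rather than technical: it is to see that the eigenvalues clustered near zero, where no bounded-degree polynomial can track $\sgn$, enter the final error only through $w\rbra*{\nu_-,\delta_p}$, so that the hypothesis $w\rbra*{\nu_-,\delta_p}\leq\varepsilon/4$ absorbs them without forcing the degree $d$ above $O\rbra*{\log\rbra*{1/\varepsilon}/\delta_p}$; this is precisely what the decomposition \eqref{eq:trace-distance-sgn} together with the $w$-hypothesis is engineered to deliver.
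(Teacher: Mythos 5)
Your proposal is correct and follows essentially the same route as the paper's own proof: the identity $T\rbra*{\rho,\sigma}=\tr\rbra*{\nu_-\sgn^{\SV}\rbra*{\nu_-}}$, block-encoding of $\nu_-$ via purified density matrices and LCU, QSVT with the polynomial sign approximation, the Hadamard test combined with amplitude estimation, and the same split of the systematic error at $\delta_p$ so that the small eigenvalues are absorbed by the hypothesis $w\rbra*{\nu_-,\delta_p}\leq\varepsilon/4$. The only differences are cosmetic (you normalize the LCU coefficients as $\rbra*{1/2,-1/2}$ with $\beta=1$ rather than $\rbra*{1,-1}$ with $\beta=2$, which yields the same block-encoding of $\nu_-$).
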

    
    \begin{proof}
    
    Let $\nu = \rbra{\rho - \sigma} / 2$ with singular value decomposition $\nu = W \Sigma V^\dag$. Then, 
    \begin{align}
        T\rbra*{\rho, \sigma}
        & = \tr\rbra*{\abs*{\frac{\rho-\sigma}{2}}} \\
        & = \Abs*{\nu}_{\tr} \\
        & = \tr \rbra*{ \nu \sgn^{\SV}\rbra*{\nu} } \\
        & = \frac 1 2 \rbra*{ \tr\rbra*{ \sgn^{\SV}\rbra*{\nu} \rho } - \tr\rbra*{ \sgn^{\SV}\rbra*{\nu} \sigma } }.
    \end{align}
    The main idea of our algorithm is to estimate $x_{\rho} \approx \tr\rbra{\sgn^{\SV}\rbra*{\nu} \rho}$ and $x_{\sigma} \approx \tr\rbra{\sgn^{\SV}\rbra*{\nu} \sigma}$, and then output $\rbra*{x_{\rho} - x_{\sigma}} / 2$ as the estimate of the trace distance $T\rbra*{\rho, \sigma}$.

    \textbf{Step 1: Implement the block-encodings of $\rho$ and $\sigma$.}
    Suppose $O_\rho$ and $O_\sigma$ are $\rbra*{n+n_\rho}$-qubit and $\rbra*{n+n_\sigma}$-qubit quantum unitary oracles that prepare $n$-qubit mixed quantum states $\rho$ and $\sigma$, respectively, where $N = 2^n$ and $\max\cbra*{n_\rho, n_\sigma} \leq n$. By Theorem \ref{thm:density-to-block-encoding}, we can obtain unitary operators $U_\rho$ and $U_\sigma$ using $O\rbra*{1}$ queries to $O_\rho$ and $O_\sigma$, respectively, such that $U_\rho$ is a $\rbra*{1, n+n_\rho, 0}$-block-encoding of $\rho$ and $U_\sigma$ is a $\rbra*{1, n+n_\sigma, 0}$-block-encoding of $\sigma$. 

    \textbf{Step 2: Implement the block-encoding of $\nu = \rbra{\rho-\sigma}/2$.}
    According to Definition \ref{def:state-preparation-pair}, we note that $(HX, H)$ is a $(2, 1, 0)$-state-preparation-pair for $y = (1, -1)$, where $H$ is the Hadamard gate and $X$ is the Pauli matrix.
    By Theorem \ref{thm:lcu}, there is a quantum operator $U_{\nu}$ using $1$ query to each of $U_\rho$ and $U_\sigma$ and $O(1)$ elementary quantum gates such that $U_\nu$ is a $(1, O\rbra*{n+n_\rho+n_\sigma}, 0)$-block-encoding of $\nu = (\rho - \sigma)/2$.

    \textbf{Step 3: Implement the block-encoding of $\sgn^{\SV}\rbra{\nu}$.}
    Now we start from $U_\nu$, a $\rbra*{1, O\rbra*{n+n_\rho+n_\sigma}, 0}$-block-encoding of $\nu$, to construct a block-encoding of $\sgn^{\SV}\rbra*{\nu}$. By Theorem \ref{thm:sgn}, we have an odd polynomial $p \in \mathbb{R}\sbra*{x}$ of degree $d_p = O\rbra*{\frac{\log\rbra*{1/\varepsilon_p}}{\delta_p}}$, where $\varepsilon_p \in \rbra*{0, 1/2}$ is to be determined, such that
    \begin{enumerate}
        \item $\abs*{p\rbra*{x}} \leq 1$ for all $x \in \sbra*{-2, 2}$.
        \item $\abs*{p\rbra*{x} - \sgn\rbra*{x}} \leq \varepsilon_p$ for all $x \in \sbra*{-2, 2} \setminus \rbra*{-\delta_p, \delta_p}$.
    \end{enumerate}
    By Theorem \ref{thm:qsvt}, we can implement a unitary operator $U_{p^{\SV}\rbra*{\nu}}$ using $O\rbra*{d_p}$ queries to $U_\nu$ and $O\rbra*{\rbra*{n+n_\rho+n_\sigma}d_p}$ elementary quantum gates such that $U_{p^{\SV}\rbra*{\nu}}$ is a $\rbra*{1, O\rbra*{n+n_\rho+n_\sigma}, 0}$-block-encoding of $p^{\SV}\rbra*{\nu}$.

    \textbf{Step 4: Estimate $\tr\rbra{p^{\SV}\rbra{\nu} \rho}$ and $\tr\rbra{p^{\SV}\rbra{\nu} \sigma}$}.
    Combining Theorem \ref{thm:hadamard-test} and Theorem \ref{thm:amp-estimation}, we can obtain an estimation $x_\rho$ of $\tr\rbra*{p^{\SV}\rbra*{\nu} \rho}$ within additive error $\varepsilon_H$ with high probability using $O\rbra*{1/\varepsilon_H}$ queries to $U_{p^{\SV}\rbra*{\nu}}$ and $O_\rho$. 
    Similarly, we can obtain an estimation $x_\sigma$ of $\tr\rbra*{p^{\SV}\rbra*{\nu} \sigma}$ within additive error $\varepsilon_H$ with high probability using $O\rbra*{1/\varepsilon_H}$ queries to $U_{p^{\SV}\rbra*{\nu}}$ and $O_\sigma$. 
    That is, 
    \begin{equation}
        \abs*{x_\rho - \tr\rbra*{p^{\SV}\rbra*{\nu} \rho} } \leq \varepsilon_H,
    \end{equation}
    \begin{equation}
        \abs*{x_\sigma - \tr\rbra*{p^{\SV}\rbra*{\nu} \sigma} } \leq \varepsilon_H.
    \end{equation}

    \textbf{Step 5: Estimate the trace distance}.
    Finally, we output $\rbra*{x_\rho - x_\sigma} / 2$ as the estimate of $T\rbra*{\rho, \sigma}$. 
    
    \textbf{Error analysis}.
    Let $\nu = \sum_{j \in \sbra*{N}} \lambda_j \ket{\psi_j}\bra{\psi_j}$ be the spectral decomposition of $\nu$. Since $\nu$ is Hermitian, we have $p^{\SV}\rbra*{\nu} = p\rbra*{\nu}$ and $\sgn^{\SV}\rbra*{\nu} = \sgn\rbra*{\nu}$. Moreover, 
    \begin{align}
        & \abs*{ \tr\rbra*{\nu p^{\SV}\rbra*{\nu}} - \tr\rbra*{\nu \sgn^{\SV}\rbra*{\nu}} } \nonumber \\
        & \qquad \leq \sum_{j \in \sbra*{N}} \abs*{ \lambda_j p\rbra*{\lambda_j} - \lambda_j } \\
        & \qquad = \sum_{\abs*{\lambda_j} > \delta_p} \abs*{\lambda_j} \abs*{ p\rbra*{\lambda_j} - 1 } + \sum_{\abs*{\lambda_j} \leq \delta_p} \abs*{\lambda_j} \abs*{ p\rbra*{\lambda_j} - 1 } \\
        & \qquad \leq \sum_{\abs*{\lambda_j} > \delta_p} \abs*{\lambda_j} \varepsilon_p + \sum_{\abs*{\lambda_j} \leq \delta_p} 2 \abs*{\lambda_j} \\
        & \qquad \leq 2 \varepsilon_p + 2 w\rbra*{\nu, \delta_p} \\
        & \qquad \leq 2 \varepsilon_p + \frac{\varepsilon} {2}.
    \end{align}
    Therefore, with probability $O\rbra*{1}$, we have
    \begin{align}
        \abs*{ \frac{x_\rho - x_\sigma}{2} - T\rbra*{\rho, \sigma} } &
        \leq \frac 1 2 \abs*{x_\rho - \tr\rbra*{ \sgn^{\SV}\rbra*{\nu} \rho }} + \frac 1 2 \abs*{x_\sigma - \tr\rbra*{ \sgn^{\SV}\rbra*{\nu} \sigma }} \nonumber \\ & \qquad + 
        \abs*{ \tr\rbra*{\nu p^{\SV}\rbra*{\nu}} - \tr\rbra*{\nu \sgn^{\SV}\rbra*{\nu}} } \\
        & \leq \varepsilon_H + 2\varepsilon_p + \frac{\varepsilon}{2}.
    \end{align}

    \textbf{Complexity analysis}. 
    By letting $\varepsilon_p = \varepsilon/8$ and $\varepsilon_H = \varepsilon/4$, the query complexity is
    \begin{equation}
        O\rbra*{\frac{\log\rbra*{1/\varepsilon_p}}{\delta_p} \cdot \frac{1}{\varepsilon_H}} = O\rbra*{\frac{1}{\delta_p\varepsilon} \log\rbra*{\frac{1}{\varepsilon}}}.
    \end{equation}
    Furthermore, the number of elementary quantum gates is 
    \begin{equation}
    O\rbra*{\frac{1}{\delta_p\varepsilon} \log\rbra*{\frac{1}{\varepsilon}} \log\rbra*{N}}.\qedhere
    \end{equation}
    \end{proof}
    
    See Algorithm \ref{algo:purified} for a formal description of our algorithm in Theorem \ref{thm:purified}.
    
    \begin{algorithm}[t]
        \caption{Quantum algorithm for trace distance estimation given purified access.}
        \label{algo:purified}
        \begin{algorithmic}[1]
        \Require Quantum oracles $O_\rho$ and $O_\sigma$ that prepare mixed quantum states $\rho$ and $\sigma$, respectively; the desired additive error $\varepsilon > 0$; and $\delta_p > 0$ such that $w\rbra*{\rbra*{\rho - \sigma}/2, \delta_p} \leq \varepsilon / 4$.
        \Ensure An estimate of $T(\rho, \sigma)$ within additive error $\varepsilon$ with probability $O\rbra*{1}$. 
        
        \State $\varepsilon_p \gets \varepsilon/8$.
        
        \State $\varepsilon_H \gets \varepsilon/4$.
        
        \State $U_\rho$ and $U_\sigma$, unitary operators using $O\rbra*{1}$ queries to $O_\rho$ and $O_\sigma$ (by Theorem \ref{thm:density-to-block-encoding}), are $\rbra*{1, O\rbra*{n}, 0}$-block-encodings of $\rho$ and $\sigma$, respectively. 
        
        \State $U_\nu$, a unitary operator using $1$ query to each of $U_\rho$ and $U_\sigma$ (by Theorem \ref{thm:lcu}), is a $\rbra*{1, O\rbra*{n}, 0}$-block-encoding of $\nu = \rbra*{\rho-\sigma}/2$.
        
        \State Let $p \in \mathbb{R}\sbra*{x}$ be an odd polynomial of degree $d_p = O\rbra*{\frac{\log\rbra*{1/\varepsilon_p}}{\delta_p}}$ (by Theorem \ref{thm:sgn}) such that 
        \begin{enumerate}
            \item $\abs*{p\rbra*{x}} \leq 1$ for all $x \in \sbra*{-2, 2}$.
            \item $\abs*{p\rbra*{x} - \sgn\rbra*{x}} \leq \varepsilon_p$ for all $x \in \sbra*{-2, 2} \setminus \rbra*{-\delta_p, \delta_p}$.
        \end{enumerate}
        
        \State $U_{p^{\SV}\rbra*{\nu}}$, a unitary operator using $O\rbra*{d_p}$ queries to $U_\nu$ (by Theorem \ref{thm:qsvt}), is a $\rbra*{1, O\rbra*{n}, 0}$-block-encoding of $p^{\SV}\rbra*{\nu}$.
        
        \State $x_\rho \gets \tr\rbra*{p^{\SV}\rbra*{\nu} \rho} \pm \varepsilon_H$ with probability $O\rbra*{1}$ using $O\rbra*{1/\varepsilon_H}$ queries to $U_{p^{\SV}\rbra*{\nu}}$ and $O_\rho$ (by Theorem \ref{thm:hadamard-test} and Theorem \ref{thm:amp-estimation}).
        
        \State $x_\sigma \gets \tr\rbra*{p^{\SV}\rbra*{\nu} \sigma} \pm \varepsilon_H$ with probability $O\rbra*{1}$ using $O\rbra*{1/\varepsilon_H}$ queries to $U_{p^{\SV}\rbra*{\nu}}$ and $O_\sigma$ (by Theorem \ref{thm:hadamard-test} and Theorem \ref{thm:amp-estimation}).
        
        \State \Return $\rbra*{x_\rho - x_\sigma} / 2$.
        \end{algorithmic}
    \end{algorithm}
    
    \begin{corollary} [Low-rank trace distance estimation with purified access] \label{corollary:purified}
        Given quantum oracles $O_\rho$  and $O_\sigma$ that prepare $N$-dimensional quantum states $\rho$ and $\sigma$, respectively, 
        there is a quantum algorithm that computes the trace distance $T\rbra*{\rho, \sigma}$ within additive error $\varepsilon$ using 
        \begin{equation}
            O\rbra*{\frac{r}{\varepsilon^2} \log\rbra*{\frac{1}{\varepsilon}}}
        \end{equation}
        queries to these oracles 
        and 
        \begin{equation}
        O\rbra*{\frac{r}{\varepsilon^2} \log\rbra*{\frac{1}{\varepsilon}}\log\rbra*{N}}
        \end{equation}
        elementary quantum gates, where $r$ is the upper bound of the rank of $\rho$ and $\sigma$. 
    \end{corollary}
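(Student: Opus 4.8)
The plan is to obtain Corollary~\ref{corollary:purified} as an immediate specialization of Theorem~\ref{thm:purified} to strictly rank-$r$ states, the only real work being to choose the threshold $\delta_p$ as large as the hypothesis of that theorem permits (since the complexity scales as $1/\delta_p$, a larger admissible $\delta_p$ means a cheaper algorithm). Theorem~\ref{thm:purified} already yields query complexity $O\rbra*{\frac{1}{\delta_p\varepsilon}\log\rbra*{\frac{1}{\varepsilon}}}$ whenever $w\rbra*{\rbra*{\rho-\sigma}/2, \delta_p} \leq \varepsilon/4$, so it remains only to certify this condition for a suitable $\delta_p = \Theta\rbra*{\varepsilon/r}$ and then substitute.

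First I would bound the rank of $\nu = \rbra*{\rho-\sigma}/2$. By subadditivity of rank, $\rank\rbra*{\nu} \leq \rank\rbra*{\rho} + \rank\rbra*{\sigma} \leq 2r$. Next I would apply the Proposition of Section~\ref{sec:approx-rank}, which states $w\rbra*{A,\delta} \leq \delta\rank\rbra*{A}$ for every Hermitian operator $A$; since $\nu$ is Hermitian this gives $w\rbra*{\nu, \delta_p} \leq \delta_p \rank\rbra*{\nu} \leq 2r\delta_p$.

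To meet the hypothesis of Theorem~\ref{thm:purified} it then suffices that $2r\delta_p \leq \varepsilon/4$, so I would set $\delta_p = \varepsilon/\rbra*{8r}$. Substituting this choice into the complexity bounds of Theorem~\ref{thm:purified} gives query complexity $O\rbra*{\frac{1}{\delta_p\varepsilon}\log\rbra*{\frac{1}{\varepsilon}}} = O\rbra*{\frac{r}{\varepsilon^2}\log\rbra*{\frac{1}{\varepsilon}}}$ and gate complexity $O\rbra*{\frac{r}{\varepsilon^2}\log\rbra*{\frac{1}{\varepsilon}}\log\rbra*{N}}$, exactly as claimed.

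The argument is essentially mechanical, and I do not expect any genuine obstacle: all of the analytic content lives in Theorem~\ref{thm:purified} and in the Proposition bounding $w$ by $\delta\cdot\rank$. The only mild subtlety is the rank bound on $\nu$, where one must remember that the difference of two rank-$r$ operators can have rank up to $2r$ rather than $r$; this simply contributes the constant factor absorbed into the choice $\delta_p = \varepsilon/\rbra*{8r}$.
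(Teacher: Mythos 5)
Your proposal is correct and matches the paper's own proof essentially verbatim: the paper also takes $\delta_p = \varepsilon/8r$, bounds $w\rbra*{\rbra*{\rho-\sigma}/2, \delta_p} \leq \delta_p\rbra*{\rank\rbra*{\rho}+\rank\rbra*{\sigma}} \leq \varepsilon/4$ via the proposition $w\rbra*{A,\delta}\leq\delta\rank\rbra*{A}$, and substitutes into Theorem \ref{thm:purified}. No gaps; the rank-subadditivity observation you flag is exactly the one the paper uses.
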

    \begin{proof}
        Taking $\delta_p = \varepsilon/8r$ in Theorem \ref{thm:purified}, we will obtain the desired complexity by noting that
        \begin{align}
            w\rbra*{\frac{\rho - \sigma}{2}, \delta_p} 
            & \leq \delta_p \cdot \rank\rbra*{\frac{\rho-\sigma}{2}} \\
            & \leq \frac{\varepsilon}{8r} \cdot \rbra*{\rank\rbra*{\rho} + \rank\rbra*{\sigma}} \\
            & \leq \frac{\varepsilon}{8r} \cdot 2r 
            = \frac{\varepsilon}{4}.
        \end{align}
    \end{proof}

    \subsection{Sample Access}
    
    In this subsection, we will provide a quantum algorithm given sample access.
    In this algorithm, our idea is to use quantum channels to approximately implement the unitary operators $U_\rho$ and $U_\sigma$ used in Algorithm \ref{algo:purified}. 
    By Theorem \ref{thm:sample-to-block-encoding}, we can implement quantum channels for (controlled-)$U_\rho$ and (controlled-)$U_\sigma$, using only samples of $\rho$ and $\sigma$.
    Also, quantum channels for (controlled-)$U_\rho^\dag$ and (controlled-)$U_\sigma^\dag$ can also be implemented similarly by Lemma \ref{lemma:block-encoding-invertible}.

    Now we are ready to show our quantum algorithm for estimating the trace distance between two mixed quantum states given sample access as follows.
    
    \begin{theorem} \label{thm:sample}
        Given access to identical copies of $N$-dimensional quantum states $\rho$ and $\sigma$, for every $\delta_p > 0$ such that 
        \begin{equation}
            w\rbra*{\frac{\rho - \sigma}{2}, \delta_p} \leq \frac{\varepsilon}{4}, 
        \end{equation}
        with $w\rbra{\cdot, \cdot}$ defined in Eq. (\ref{eq:w}),
        there is a quantum algorithm that computes the trace distance $T\rbra*{\rho, \sigma}$ within additive error $\varepsilon$ using 
        \begin{equation}
        O\rbra*{\frac{1}{\delta_p^2\varepsilon^3}
        \log^2\rbra*{\frac{1}{\delta_p\varepsilon}} \log^2\rbra*{\frac{1}{\varepsilon}}}
        \end{equation}
        samples of $\rho$ and $\sigma$ and 
        \begin{equation}
        O\rbra*{\frac{1}{\delta_p^2\varepsilon^3} \log^2\rbra*{\frac{1}{\delta_p\varepsilon}} \log^2\rbra*{\frac{1}{\varepsilon}} \log\rbra*{N}}
        \end{equation}
        elementary quantum gates. In addition, the depth of the quantum circuit is 
        \begin{equation}
        O\rbra*{\frac{1}{\delta_p^2\varepsilon} \log^2\rbra*{\frac{1}{\delta_p\varepsilon}} \log^2\rbra*{\frac{1}{\varepsilon}} \log\rbra*{N}}.
        \end{equation}
    \end{theorem}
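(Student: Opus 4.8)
The plan is to reuse the pipeline of Algorithm \ref{algo:purified} essentially verbatim, replacing only the two places where purified access was used. Recall that in the purified-access algorithm, purified access enters (i) through Theorem \ref{thm:density-to-block-encoding} to build the exact $\rbra*{1, O\rbra*{n}, 0}$-block-encodings $U_\rho$ and $U_\sigma$, and (ii) through the copies of $\rho$ and $\sigma$ fed into the Hadamard test. Demand (ii) is immediate here, since identical copies are given directly. For demand (i), I would invoke Theorem \ref{thm:sample-to-block-encoding} to produce, from samples alone, quantum channels $\mathcal{E}_\rho$ and $\mathcal{E}_\sigma$ that are $\delta'$-close in the diamond norm (for a per-channel error $\delta'$ to be fixed below) to unitaries $U_\rho$, $U_\sigma$ that are $\rbra*{4/\pi, 3, 0}$-block-encodings of $\rho$ and $\sigma$. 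The remaining steps --- the LCU combination into a block-encoding of $\nu = \rbra*{\rho-\sigma}/2$, the QSVT sign polynomial from Theorem \ref{thm:sgn} and Theorem \ref{thm:qsvt}, and the Hadamard test of Theorem \ref{thm:hadamard-test} --- are then applied exactly as before, except that every call to $U_\rho$, $U_\sigma$ and their inverses and controlled versions is realized by the corresponding channel (using $W^\dag$ and controlled-$W$ for the inverse and controlled variants, as explained just before the statement).

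Two book-keeping points need care. First, the block-encodings now carry the subnormalization factor $4/\pi$ rather than $1$: after the LCU step with the state-preparation pair for $y = \rbra*{1,-1}$, one obtains a $\rbra*{4/\pi, O\rbra*{n}, 0}$-block-encoding of $\nu$, i.e.\ an exact $\rbra*{1, O\rbra*{n}, 0}$-block-encoding of $\pi\nu/4$. Since $\sgn\rbra*{\cdot}$ is invariant under positive scaling, running QSVT with the sign polynomial on $\pi\nu/4$ still yields (an approximation of) $\sgn^{\SV}\rbra*{\nu}$, provided the polynomial of Theorem \ref{thm:sgn} is taken with threshold $\pi\delta_p/4$ in place of $\delta_p$; because $\pi/4$ is a constant this leaves the degree $d_p = O\rbra*{\log\rbra*{1/\varepsilon_p}/\delta_p}$ unchanged up to constants, and the eigenvalues of $\pi\nu/4$ lie well inside $\sbra*{-1,1}$. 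The deterministic part of the error analysis is then identical to that of Theorem \ref{thm:purified}, giving an approximation error $2\varepsilon_p + 2w\rbra*{\nu, \delta_p} \leq 2\varepsilon_p + \varepsilon/2$.

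The genuinely new ingredient is the propagation of the diamond-norm error together with the loss of amplitude estimation. Since the QSVT circuit makes $O\rbra*{d_p}$ calls to $U_\nu$, hence $O\rbra*{d_p}$ channel uses, subadditivity of the diamond norm under composition shows that the implemented Hadamard-test channel is $O\rbra*{d_p \delta'}$-close to the ideal one, so the measured acceptance probability --- and therefore the estimate of $\tr\rbra*{\sgn^{\SV}\rbra*{\nu}\rho}$ --- is perturbed by $O\rbra*{d_p\delta'}$. Choosing $\delta' = \Theta\rbra*{\varepsilon_H/d_p}$ keeps this systematic error at $O\rbra*{\varepsilon_H}$. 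Because we only have channel (not coherent unitary) access, amplitude estimation is unavailable, so each expectation must be read off by direct repetition of the Hadamard test $O\rbra*{1/\varepsilon_H^2}$ times to reach statistical error $\varepsilon_H$. Setting $\varepsilon_p = \varepsilon_H = \Theta\rbra*{\varepsilon}$ then bounds the total error in $\rbra*{x_\rho - x_\sigma}/2$ by $O\rbra*{\varepsilon}$, as required.

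For the complexity, the number of channel invocations is $O\rbra*{1/\varepsilon_H^2} \cdot O\rbra*{d_p}$, and each invocation costs $O\rbra*{\rbra*{\log\rbra*{1/\delta'}}^2/\delta'}$ samples by Theorem \ref{thm:sample-to-block-encoding}; substituting $\delta' = \Theta\rbra*{\varepsilon_H/d_p}$, $d_p = O\rbra*{\log\rbra*{1/\varepsilon}/\delta_p}$ and $\varepsilon_p = \varepsilon_H = \Theta\rbra*{\varepsilon}$ gives the claimed $O\rbra*{\frac{1}{\delta_p^2\varepsilon^3}\log^2\rbra*{\frac{1}{\delta_p\varepsilon}}\log^2\rbra*{\frac{1}{\varepsilon}}}$ samples, with an extra $\log\rbra*{N}$ factor in the gate count since each channel uses $O\rbra*{n}$ gates per sample. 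For the depth, the $O\rbra*{1/\varepsilon_H^2}$ repetitions are mutually independent and run in parallel, so they do not contribute to depth; the depth is that of a single Hadamard test, which removes exactly the factor $1/\varepsilon_H^2 = \Theta\rbra*{1/\varepsilon^2}$ and yields $O\rbra*{\frac{1}{\delta_p^2\varepsilon}\log^2\rbra*{\frac{1}{\delta_p\varepsilon}}\log^2\rbra*{\frac{1}{\varepsilon}}\log\rbra*{N}}$. I expect the main obstacle to be the error-budget step: correctly quantifying how the per-channel diamond-norm error compounds across the $O\rbra*{d_p}$ layers of the QSVT circuit (including the controlled and inverted channel calls) and balancing $\delta'$ against the sampling cost, since getting these dependencies right is exactly what pins down the final exponents.
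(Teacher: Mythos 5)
Your proposal follows essentially the same route as the paper's proof: replace the two uses of purified access by the sampling-to-block-encoding channels of Theorem \ref{thm:sample-to-block-encoding} (with $W^\dag$ and controlled-$W$ for inverses and controlled calls), keep the LCU--QSVT--Hadamard-test pipeline, propagate the diamond-norm error subadditively over the $O\rbra*{d_p}$ channel uses, replace amplitude estimation by $O\rbra*{1/\varepsilon_H^2}$ direct repetitions, and choose the per-channel error $\delta' = \Theta\rbra*{\varepsilon_H/d_p}$, exactly matching the paper's choice $\delta = \Theta\rbra*{\varepsilon\delta_p/\log\rbra*{1/\varepsilon}}$ and yielding the same sample, gate, and depth counts. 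Your handling of the $4/\pi$ subnormalization (rescaling the sign-polynomial threshold to $\pi\delta_p/4$) is if anything slightly more explicit than the paper's bookkeeping, which instead carries the $\pi/4$ factor through the Hadamard test and corrects it in the final output $2\rbra*{\widetilde x_\rho - \widetilde x_\sigma}/\pi$.
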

    
    \begin{proof}
    
    The algorithm follows but is more complicated than that in Theorem \ref{thm:purified}. 
    
    \textbf{Step 1: Implement the block-encodings of $\rho$ and $\sigma$.}
    The first step is to convert samples of $\rho$ and $\sigma$ to their block-encodings, respectively. 
    By Theorem \ref{thm:sample-to-block-encoding}, we can implement a quantum super-operator $\mathcal{E}_\rho$ using  $O\rbra*{\frac{\rbra*{\log\rbra*{1/\delta}}^2}{\delta}}$ samples of $\rho$ and $O\rbra*{n \cdot \frac{\rbra*{\log\rbra*{1/\delta}}^2}{\delta}}$ elementary quantum gates such that $\Abs*{ \mathcal{E}_\rho - \mathcal{U}_\rho }_{\diamond} \leq \delta$,
    where $\mathcal{U}_\rho\rbra*{\cdot} = U_\rho \rbra*{\cdot} U_\rho^\dag$ and unitary operator $U_\rho$ is a $\rbra*{4/\pi, 3, 0}$-block-encoding of $\rho$. We can also obtain a quantum super-operator $\mathcal{E}_\sigma$ for unitary operator $U_\sigma$ (block-encoding of $\sigma$) similar to that for $\rho$. In the following, quantum super-operators will be used as if they were unitary operators.

    \textbf{Step 2: Implement the block-encoding of $\nu = \rbra{\rho-\sigma}/2$.}
    According to Definition \ref{def:state-preparation-pair}, we note that $(HX, H)$ is a $(2, 1, 0)$-state-preparation-pair for $y = (1, -1)$, where $H$ is the Hadamard gate and $X$ is the Pauli matrix.
    By Theorem \ref{thm:lcu}, there is a quantum super-operator $\mathcal{E}_{\nu}$ using $1$ query to each of $\mathcal{E}_\rho$ and $\mathcal{E}_\sigma$ and $O(1)$ elementary quantum gates such that $\mathcal{E}_\nu$ is $2\delta$-close in the diamond norm to a $(4/\pi, O\rbra*{1}, 0)$-block-encoding of $\nu = (\rho - \sigma)/2$.

    \textbf{Step 3: Implement the block-encoding of $\sgn^{\SV}\rbra{\nu}$.}
    Now we start from $\mathcal{E}_\nu$, a $2\delta$-close in the diamond norm quantum super-operator to a $\rbra*{4/\pi, O\rbra*{1}, 0}$-block-encoding of $\nu$, to construct a quantum super-operator close to a block-encoding of $\sgn^{\SV}\rbra*{\nu}$. By Theorem \ref{thm:sgn}, we have an odd polynomial $p \in \mathbb{R}\sbra*{x}$ of degree $d_p \leq  \frac{\eta\log\rbra*{1/\varepsilon_p}}{\delta_p}$ for some constant $\eta > 0$, where $\varepsilon_p \in \rbra*{0, 1/2}$ is to be determined, such that
    \begin{enumerate}
        \item $\abs*{p\rbra*{x}} \leq 1$ for all $x \in \sbra*{-2, 2}$.
        \item $\abs*{p\rbra*{x} - \sgn\rbra*{x}} \leq \varepsilon_p$ for all $x \in \sbra*{-2, 2} \setminus \rbra*{-\delta_p, \delta_p}$.
    \end{enumerate}
    By Theorem \ref{thm:qsvt}, we can implement a quantum super-operator $\mathcal{E}_{p^{\SV}\rbra*{\nu}}$ using $q \leq \gamma d_p = O\rbra*{d_p}$ queries to $\mathcal{E}_\nu$ for some constant $\gamma > 0$ and $O\rbra*{d_p}$ elementary quantum gates such that $\mathcal{E}_{p^{\SV}\rbra*{\nu}}$ is $2q\delta$-close in the diamond norm to a $\rbra*{4/\pi, O\rbra*{1}, 0}$-block-encoding of $p^{\SV}\rbra*{\nu}$.

    \textbf{Step 4: Estimate $\tr\rbra{p^{\SV}\rbra{\nu} \rho}$ and $\tr\rbra{p^{\SV}\rbra{\nu} \sigma}$}.
    By Theorem \ref{thm:hadamard-test}, we can obtain an estimation $\widetilde x_\rho$ of $\tr\rbra*{p^{\SV}\rbra*{\nu} \rho}$ within additive error $\varepsilon_H + 2q\delta$ with probability $O\rbra*{1}$ using $O\rbra*{1/\varepsilon_H^2}$ repetitions of Hadamard test, where each repetition uses $1$ query to $\mathcal{E}_{p^{\SV}\rbra*{\nu}}$. That is, with probability $O\rbra*{1}$, we have
    \begin{equation}
        \abs*{\widetilde x_\rho - \frac{\pi}{4} \tr\rbra*{p^{\SV}\rbra*{\nu} \rho} } \leq \varepsilon_H + 2q\delta,
    \end{equation}
    where $\varepsilon_H$ is from the Hadamard test, and $2q\delta$ is due to the error in the diamond norm.
    Similarly, we can obtain an estimation $\widetilde x_\sigma$ within additive error $\varepsilon_H + 2q\delta$ with probability $O\rbra*{1}$ such that
    \begin{equation}
        \abs*{\widetilde x_\sigma - \frac{\pi}{4}\tr\rbra*{p^{\SV}\rbra*{\nu} \sigma} } \leq \varepsilon_H + 2q\delta.
    \end{equation}

    \textbf{Step 5: Estimate the trace distance}.
    Finally, we output $2\rbra*{\widetilde x_\rho - \widetilde x_\sigma} / \pi$ as the estimate of $T\rbra*{\rho, \sigma}$. 
    
    \textbf{Error analysis}. 
    Combining the above, with probability $O\rbra*{1}$, we have
    \begin{align}
        & \abs*{ \frac{2}{\pi}\rbra*{\widetilde x_\rho - \widetilde x_\sigma} - T\rbra*{\rho, \sigma} } \nonumber \\
        & \leq \abs*{ \frac{2}{\pi}\rbra*{\widetilde x_\rho - \widetilde x_\sigma} - \frac{\tr\rbra*{ p^{\SV}\rbra*{\nu} \rho} - \tr\rbra*{ p^{\SV}\rbra*{\nu}\sigma}}{2} } + \abs*{ \frac{\tr\rbra*{ p^{\SV}\rbra*{\nu} \rho} - \tr\rbra*{ p^{\SV}\rbra*{\nu}\sigma}}{2} - T\rbra*{\rho, \sigma} } \\
        & \leq \frac{4}{\pi} \rbra*{ \varepsilon_H + 2q\delta } + 2\varepsilon_p + \frac{\varepsilon}{2} \\
        & \leq \frac{8 \gamma \eta \delta \log\rbra*{1/\varepsilon_p}}{\pi\delta_p} + \frac{4\varepsilon_H}{\pi} + 2\varepsilon_p + \frac{\varepsilon}{2}. 
    \end{align}
    
    \textbf{Complexity analysis}. 
    By letting $\varepsilon_p = \varepsilon/12$, $\varepsilon_H = \pi\varepsilon/24$, and $\delta = \frac{\pi \varepsilon \delta_p}{48 \gamma \eta \log\rbra*{1/\varepsilon_p}}$, the sample complexity is
    \begin{align}
        O\rbra*{\frac{\rbra*{\log\rbra*{1/\delta}}^2}{\delta} \cdot \frac{\log\rbra*{1/\varepsilon_p}}{\delta_p} \cdot \frac{1}{\varepsilon_H^2}} 
        = O\rbra*{\frac{1}{\delta_p^2\varepsilon^3}
        \log^2\rbra*{\frac{1}{\delta_p\varepsilon}} \log^2\rbra*{\frac{1}{\varepsilon}}}.
    \end{align}
    Furthermore, the number of elementary quantum gates is 
    \begin{equation}
    O\rbra*{\frac{1}{\delta_p^2\varepsilon^3} \log^2\rbra*{\frac{1}{\delta_p\varepsilon}} \log^2\rbra*{\frac{1}{\varepsilon}} \log\rbra*{N}},
    \end{equation}
    and the depth of the quantum circuit is
    \begin{equation}
    O\rbra*{\frac{1}{\delta_p^2\varepsilon} \log^2\rbra*{\frac{1}{\delta_p\varepsilon}} \log^2\rbra*{\frac{1}{\varepsilon}} \log\rbra*{N}}.
    \end{equation}
    \end{proof}
    
    See Algorithm \ref{algo:sample} for a formal description of our algorithm in Theorem \ref{thm:sample}.
    
    \begin{algorithm}[t]
        \caption{Quantum algorithm for trace distance estimation given sample access.}
        \label{algo:sample}
        \begin{algorithmic}[1]
        \Require Identical copies of quantum states $\rho$ and $\sigma$; the desired additive error $\varepsilon > 0$; and $\delta_p > 0$ such that $w\rbra*{\rbra*{\rho - \sigma}/2, \delta_p} \leq \varepsilon / 4$.
        \Ensure An estimate of $T(\rho, \sigma)$ within additive error $\varepsilon$ with probability $O\rbra*{1}$. 
        
        \State $\varepsilon_p \gets \varepsilon/12$.
        
        \State $\varepsilon_H \gets \pi\varepsilon/24$.
        
        \State $\delta \gets \frac{\pi \varepsilon \delta_p}{48 \gamma \eta \log\rbra*{1/\varepsilon_p}}$, where $\gamma$ and $\eta$ are the constants in Theorem \ref{thm:qsvt} and Theorem \ref{thm:sgn}, respectively.
        
        \State $\mathcal{E}_\rho$ and $\mathcal{E}_\sigma$, quantum super-operators using  $O\rbra*{\frac{\rbra*{\log\rbra*{1/\delta}}^2}{\delta}}$ samples of $\rho$ and $\sigma$, are $\delta$-close in the diamond norm to certain unitary operators that are $\rbra*{4/\pi, 3, 0}$-block-encodings of $\rho$ and $\sigma$, respectively.
        
        \State $\mathcal{E}_\nu$, a quantum super-operator using $1$ query to each of $\mathcal{E}_\rho$ and $\mathcal{E}_\sigma$ (by Theorem \ref{thm:lcu} as if they were unitary operators), is $2\delta$-close in the diamond norm to a $\rbra*{4/\pi, O\rbra*{1}, 0}$-block-encoding of $\nu = \rbra*{\rho-\sigma}/2$.
        
        \State Let $p \in \mathbb{R}\sbra*{x}$ be an odd polynomial of degree $d_p \leq \frac{\eta\log\rbra*{1/\varepsilon_p}}{\delta_p}$ (by Theorem \ref{thm:sgn}) such that 
        \begin{enumerate}
            \item $\abs*{p\rbra*{x}} \leq 1$ for all $x \in \sbra*{-2, 2}$.
            \item $\abs*{p\rbra*{x} - \sgn\rbra*{x}} \leq \varepsilon_p$ for all $x \in \sbra*{-2, 2} \setminus \rbra*{-\delta_p, \delta_p}$.
        \end{enumerate}
        
        \State $\mathcal{E}_{p^{\SV}\rbra*{\nu}}$, a quantum super-operator using $q \leq \gamma d_p$ queries to $\mathcal{E}_\nu$ (by Theorem \ref{thm:qsvt} as if it were a unitary operator), is $2 q \delta$-close in the diamond norm to a $\rbra*{4/\pi, O\rbra*{1}, 0}$-block-encoding of $p^{\SV}\rbra*{\nu}$.
        
        \State $\widetilde x_\rho \gets \frac{\pi}{4} \tr\rbra*{p^{\SV}\rbra*{\nu} \rho} \pm \rbra*{ \varepsilon_H + 2q\delta }$ with probability $O\rbra*{1}$ using $O\rbra*{1/\varepsilon_H^2}$ queries to $\mathcal{E}_{p^{\SV}\rbra*{\nu}}$ (as if it were a unitary operator) and $O\rbra*{1/\varepsilon_H^2}$ samples of $\rho$ (by Theorem \ref{thm:hadamard-test}).
        
        \State $\widetilde x_\sigma \gets \frac{\pi}{4} \tr\rbra*{p^{\SV}\rbra*{\nu} \sigma} \pm \rbra*{ \varepsilon_H + 2q\delta }$ with probability $O\rbra*{1}$ using $O\rbra*{1/\varepsilon_H^2}$ queries to $\mathcal{E}_{p^{\SV}\rbra*{\nu}}$ (as if it were a unitary operator) and $O\rbra*{1/\varepsilon_H^2}$ samples of $\sigma$ (by Theorem \ref{thm:hadamard-test}).
        
        \State \Return $2 \rbra*{\widetilde x_\rho - \widetilde x_\sigma} / \pi$.
        \end{algorithmic}
    \end{algorithm}
    
    \begin{corollary}[Low-rank trace distance estimation with sample access] \label{corollary:sample}
        Given access to identical copies of $N$-dimensional quantum states $\rho$ and $\sigma$, there is a quantum algorithm that computes the trace distance $T\rbra*{\rho, \sigma}$ within additive error $\varepsilon$ using
        \begin{equation}
            O\rbra*{\frac{r^2}{\varepsilon^5} \log^2\rbra*{\frac{r}{\varepsilon}} \log^2\rbra*{\frac{1}{\varepsilon}}}
        \end{equation}
        samples of $\rho$ and $\sigma$ and 
        \begin{equation}
            O\rbra*{\frac{r^2}{\varepsilon^5} \log^2\rbra*{\frac{1}{\delta_p\varepsilon}} \log^2\rbra*{\frac{1}{\varepsilon}} \log\rbra*{N}}
        \end{equation}
        elementary quantum gates, where $r$ is the upper bound of the rank of $\rho$ and $\sigma$. In addition, the depth of the quantum circuit is 
        \begin{equation}
            O\rbra*{\frac{r^2}{\varepsilon^3} \log^2\rbra*{\frac{r}{\varepsilon}} \log^2\rbra*{\frac{1}{\varepsilon}} \log\rbra*{N}}.
        \end{equation}
    \end{corollary}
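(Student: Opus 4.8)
The plan is to obtain this statement as an immediate corollary of Theorem \ref{thm:sample}, in exact parallel with the derivation of Corollary \ref{corollary:purified} from Theorem \ref{thm:purified}. The only freedom afforded by Theorem \ref{thm:sample} is the choice of the threshold parameter $\delta_p$, constrained by the precondition $w\rbra*{\rbra*{\rho-\sigma}/2, \delta_p} \leq \varepsilon/4$; so the entire argument reduces to picking an admissible $\delta_p$ in terms of $r$ and $\varepsilon$ and then simplifying the resulting complexity expressions.

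First I would verify the precondition for the choice $\delta_p = \varepsilon/\rbra*{8r}$. Since $\rho$ and $\sigma$ each have rank at most $r$, subadditivity of rank gives $\rank\rbra*{\rbra*{\rho-\sigma}/2} \leq \rank\rbra*{\rho} + \rank\rbra*{\sigma} \leq 2r$. Invoking the proposition of Section \ref{sec:approx-rank}, which bounds $w\rbra*{A, \delta} \leq \delta\rank\rbra*{A}$ for Hermitian $A$, yields
\begin{equation}
    w\rbra*{\frac{\rho-\sigma}{2}, \delta_p} \leq \delta_p \cdot \rank\rbra*{\frac{\rho-\sigma}{2}} \leq \frac{\varepsilon}{8r}\cdot 2r = \frac{\varepsilon}{4},
\end{equation}
so $\delta_p = \varepsilon/\rbra*{8r}$ is admissible.

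Next I would substitute $\delta_p = \varepsilon/\rbra*{8r}$ into the three complexity bounds of Theorem \ref{thm:sample}. Using $1/\delta_p^2 = 64r^2/\varepsilon^2$, the leading factor $1/\rbra*{\delta_p^2\varepsilon^3}$ in the sample complexity becomes $O\rbra*{r^2/\varepsilon^5}$, and $1/\rbra*{\delta_p\varepsilon} = 8r/\varepsilon^2$ gives $\log\rbra*{1/\rbra*{\delta_p\varepsilon}} = O\rbra*{\log\rbra*{r/\varepsilon}}$, reproducing the claimed sample bound; the gate count follows by the same substitution together with the extra $\log\rbra*{N}$ factor, and the depth follows from $1/\rbra*{\delta_p^2\varepsilon} = 64r^2/\varepsilon^3$. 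I expect no genuine obstacle: all the work is carried by Theorem \ref{thm:sample}, and the only point needing a little care is confirming that the polylogarithmic factors collapse as stated --- in particular that $\log\rbra*{1/\rbra*{\delta_p\varepsilon}}$ and $\log\rbra*{r/\varepsilon}$ coincide up to constants, which is immediate because $\delta_p$ is polynomial in $\varepsilon/r$.
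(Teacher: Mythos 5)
Your proposal matches the paper's proof exactly: both take $\delta_p = \varepsilon/(8r)$ in Theorem \ref{thm:sample}, verify the precondition via $w\rbra*{A,\delta} \leq \delta\rank\rbra*{A}$ together with subadditivity of rank, and substitute into the complexity bounds. No issues.
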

    \begin{proof}
    Taking $\delta_p = \varepsilon/8r$ in Theorem \ref{thm:sample}, we will obtain the desired complexity by noting that
        \begin{align}
            w\rbra*{\frac{\rho - \sigma}{2}, \delta_p} 
            & \leq \delta_p \cdot \rank\rbra*{\frac{\rho-\sigma}{2}} \\
            & \leq \frac{\varepsilon}{8r} \cdot \rbra*{\rank\rbra*{\rho} + \rank\rbra*{\sigma}} \\
            & \leq \frac{\varepsilon}{8r} \cdot 2r 
            = \frac{\varepsilon}{4}.
        \end{align}
    \end{proof}

    \section{BQP-Completeness}
    \label{sec:bqp-completeness}

    In this section, we show the $\mathsf{BQP}$-completeness of (the decision version of) low-rank trace distance estimation, which is reduced from fidelity estimation of pure quantum states. 

    \begin{definition} [Low-rank trace distance testing]
        Let $n \geq 1$ be a positive integer.
        For $0 < \alpha < \beta < 1$, let $\textsc{TrDisTest}\rbra{n, \alpha, \beta}$ be a decision problem defined as follows. 
        Given the classical description of $\poly\rbra{n}$-size quantum circuits $O_{\rho}$ and $O_{\sigma}$ that prepare purifications of $n$-qubit mixed quantum states $\rho$ and $\sigma$ of rank $\poly\rbra{n}$, determine whether $T\rbra{\rho, \sigma} < \alpha$ or $T\rbra{\rho, \sigma} > \beta$, promised that it is in either case. 
    \end{definition}

    \begin{theorem}
    \label{thm:td-est-bqp}
        $\textsc{TrDisTest}\rbra{n, \alpha, \beta}$ is $\mathsf{BQP}$-complete for every $2^{-\poly\rbra{n}} \leq \alpha < \beta \leq 1 - 2^{-\poly\rbra{n}}$ with $\beta - \alpha \geq 1/\poly\rbra{n}$, even if both of the quantum states are pure. 
    \end{theorem}

        To show the $\mathsf{BQP}$-hardness of $\textsc{TrDisTest}\rbra{n, \alpha, \beta}$, we need the result in \cite{RASW23} regarding pure-state fidelity.
        \begin{definition} [Pure-state fidelity testing]
            Let $n \geq 1$ be a positive integer. 
            For every $0 < \alpha < \beta < 1$, let $\textsc{PureFidTest}\rbra{n, \alpha, \beta}$ be a decision problem defined as follows. 
            Given the classical description of $\poly\rbra{n}$-size quantum circuits $O_{\psi}$ and $O_{\phi}$ that prepare $n$-qubit pure quantum states $\ket{\psi}$ and $\ket{\phi}$, determine whether $\abs{\braket{\psi}{\phi}}^2 < \alpha$ or $\abs{\braket{\psi}{\phi}}^2 > \beta$, promised that it is in either case. 
        \end{definition}
        \begin{theorem} [Pure-state fidelity testing is $\mathsf{BQP}$-complete, {\cite[Theorem 12]{RASW23}}]
        \label{thm:fidtest}
            $\textsc{PureFidTest}\rbra{n, \alpha, \beta}$ is $\mathsf{BQP}$-complete for every $2^{-\poly\rbra{n}} \leq \alpha < \beta \leq 1 - 2^{-\poly\rbra{n}}$ with $\beta - \alpha \geq 1/\poly\rbra{n}$.
        \end{theorem}

    \begin{proof} [Proof of Theorem \ref{thm:td-est-bqp}]
        Directly applying the polynomial-time quantum algorithm for low-rank trace distance estimation given in Theorem \ref{thm:purified}, we have that $\textsc{TrDisTest}\rbra{n, \alpha, \beta}$ is in $\mathsf{BQP}$ for $2^{-\poly\rbra{n}} \leq \alpha < \beta \leq 1 - 2^{-\poly\rbra{n}}$ with $\beta - \alpha \geq 1/\poly\rbra{n}$.
        It remains to show the $\mathsf{BQP}$-hardness of $\textsc{TrDisTest}\rbra{n, \alpha, \beta}$.
        
        For every $2^{-\poly\rbra{n}} \leq \alpha < \beta \leq 1 - 2^{-\poly\rbra{n}}$ with $\beta - \alpha \geq 1/\poly\rbra{n}$, we reduce the problem $\textsc{PureFidTest}\rbra{n, \sqrt{1-\beta^2}, \sqrt{1-\alpha^2}}$ to $\textsc{TrDisTest}\rbra{n, \alpha, \beta}$ as follows.
        By Theorem \ref{thm:fidtest}, $\textsc{PureFidTest}\rbra{n, \sqrt{1-\beta^2}, \sqrt{1-\alpha^2}}$ is $\mathsf{BQP}$-complete since $2^{-\poly\rbra{n}} \leq \sqrt{1-\beta^2} < \sqrt{1-\alpha^2} \leq 1 - 2^{-\poly\rbra{n}}$ and $\sqrt{1-\alpha^2} - \sqrt{1-\beta^2} \geq 1/\poly\rbra{n}$.
        Suppose that quantum circuits $O_{\psi}$ and $O_{\phi}$ prepare pure quantum states $\ket{\psi}$ and $\ket{\phi}$, respectively, with a promise that either $\abs{\braket{\psi}{\phi}}^2 < \sqrt{1-\beta^2}$ or $\abs{\braket{\psi}{\phi}}^2 > \sqrt{1-\alpha^2}$.
        As $T\rbra{\ket{\psi}, \ket{\phi}} = \sqrt{1 - \abs{\braket{\psi}{\phi}}^2}$ by Eq. (\ref{eq:trace-distance-by-fidelity-pure}), we conclude that either $T\rbra{\ket{\psi}, \ket{\phi}} < \alpha$ or $T\rbra{\ket{\psi}, \ket{\phi}} > \beta$, promised that it is in either case.
        Therefore, we obtain a Karp reduction from $\textsc{PureFidTest}\rbra{n, \sqrt{1-\beta^2}, \sqrt{1-\alpha^2}}$ to $\textsc{TrDisTest}\rbra{n, \alpha, \beta}$.
        As a result, $\textsc{TrDisTest}\rbra{n, \alpha, \beta}$ is $\mathsf{BQP}$-hard. 
    \end{proof}
    
    \section{Approximately Low-Rank Quantum States} \label{sec:approx-low-rank}
    
    In this section, we discuss how our algorithms can be applied to approximately low-rank quantum states $\rho$ and $\sigma$. 
    Suppose we are given some prior knowledge $W_\rho\rbra*{\cdot}$ and $R_\rho\rbra*{\cdot}$ about the approximately low-rank quantum state $\rho$, defined as follows.

    \begin{definition} [$W_\rho$ and $R_\rho$]
    \label{def:W-R}
        For a quantum state $\rho$, we use two functions $W_\rho\rbra{\cdot}$ and $R_\rho\rbra{\cdot}$, with $W_\rho\rbra{\delta} \geq w\rbra{\rho, \delta}$ and  $R_\rho\rbra{\delta} \geq \rank_\delta\rbra{\rho}$ for all $\delta \geq 0$,
        to denote our prior knowledge of $\rho$.
    \end{definition}
    
    Let us start with identifying a class of approximately low-rank operators (see Section \ref{sec:approx-rank} for the notations used here). 
    
    \begin{definition} [Approximately low-rank operators]
        Let $r, \delta, \varepsilon \geq 0$.
        An Hermitian operator $A$ is said to be $\rbra*{r, \delta, \varepsilon}$-approximately-low-rank, if $\rank_{\delta}\rbra*{A} \leq r$ and $w\rbra*{A, \delta} \leq \varepsilon$.
    \end{definition}
    For every Hermitian operator $A$ of rank $r$, we note that $A$ is $\rbra*{r, 0, 0}$-approximately-low-rank, and also $\rbra*{r, \delta, r\delta}$-approximately-low-rank for every $\delta > 0$. This type of approximately low-rank quantum states were also considered in \cite{GP22} for low-rank fidelity estimation. 
    Intuitively, an $\rbra*{r, \delta, \varepsilon}$-approximately-low-rank quantum state $\rho$ is close to a quantum state of rank $r$ in the sense that: 
    \begin{enumerate}
        \item At most $r$ eigenvalues have absolute values greater than $\delta$; and 
        \item The sum of absolute eigenvalues that are not greater than $\delta$ is bounded by $\varepsilon$.
    \end{enumerate} Roughly speaking, there is a quantum state $\widetilde \rho$ of rank $r$ such that $\Abs*{\rho - \widetilde \rho} \leq \delta$ and $\tr\rbra*{\abs*{\rho - \widetilde \rho}} \leq \varepsilon$. 

    Note that in Theorem \ref{thm:purified} and Theorem \ref{thm:sample}, a condition $w\rbra*{\rbra*{\rho-\sigma}/2, \delta_p} \leq \varepsilon/4$ is required. 
    In the following, we will explain how to achieve this condition for approximately low-rank quantum states $\rho$ and $\sigma$. 
    Firstly, we show that the difference of two approximately low-rank quantum states is also approximately low-rank.
    
    \begin{proposition} \label{prop:app-low-rank}
        Suppose quantum state $\rho$ is $\rbra*{r_1,\delta,\varepsilon_1}$-approximately-low-rank and quantum state $\sigma$ is $\rbra*{r_2,\delta,\varepsilon_2}$-approximately-low-rank. Then, $\rbra*{\rho-\sigma}/2$ is $\rbra*{r_1+r_2,\delta/2,\rbra*{\rbra*{r_1+r_2}\delta+\varepsilon_1+\varepsilon_2}/2}$-approximately-low-rank.
    \end{proposition}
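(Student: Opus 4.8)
The plan is to reduce the whole statement to a single additive decomposition of $\nu := \rbra{\rho-\sigma}/2$ into a low-rank part and a small-trace-norm part, and then to read off both the approximate-rank bound and the $w$-bound from the singular values of $\nu$. First I would split each state at the threshold $\delta$ using its spectral projections: write $\rho = \rho_{>} + \rho_{\le}$, where $\rho_{>}$ collects the eigenspaces of $\rho$ with eigenvalue $>\delta$ and $\rho_{\le}$ the rest, and likewise $\sigma = \sigma_{>}+\sigma_{\le}$. Since $\rho$ is $\rbra{r_1,\delta,\varepsilon_1}$-approximately-low-rank we get $\rank(\rho_{>})\le r_1$, $\Abs{\rho_{\le}}\le\delta$, and $\Abs{\rho_{\le}}_{\tr}\le\varepsilon_1$ (similarly for $\sigma$). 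Setting $A = \rbra{\rho_{>}-\sigma_{>}}/2$ and $B = \rbra{\rho_{\le}-\sigma_{\le}}/2$ gives $\nu = A+B$ with $\rank(A)\le r_1+r_2$ and $\Abs{B}_{\tr}\le\rbra{\varepsilon_1+\varepsilon_2}/2$. For the operator norm of $B$ I would avoid the triangle inequality (which only yields $\delta$) and instead use positivity: as $\rho_{\le},\sigma_{\le}$ are PSD with norm $\le\delta$, the quadratic form $\bra{\psi}(\rho_{\le}-\sigma_{\le})\ket{\psi}$ lies in $[-\delta,\delta]$ for every unit vector, whence $\Abs{B}\le\delta/2$; this sharper bound is exactly what is needed to reach the threshold $\delta/2$.

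For the rank claim $\rank_{\delta/2}(\nu)\le r_1+r_2$ I would invoke the Weyl-type subadditivity of singular values. Writing $s_i(\cdot)$ for the $i$-th largest singular value (which, since $\nu$ is Hermitian, equals the $i$-th largest $\abs{\lambda_j}$), and using $s_{r_1+r_2+1}(A)=0$ because $\rank(A)\le r_1+r_2$, the inequality $s_{i+j-1}(X+Y)\le s_i(X)+s_j(Y)$ with $i=r_1+r_2+1$, $j=1$ gives $s_{r_1+r_2+1}(\nu)\le s_{r_1+r_2+1}(A)+\Abs{B}=\Abs{B}\le\delta/2$. Hence at most $r_1+r_2$ singular values of $\nu$ exceed $\delta/2$, i.e. $\rank_{\delta/2}(\nu)\le r_1+r_2$.

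The delicate part is the $w$-bound, and this is where I expect the main obstacle. Let $\Pi_S$ be the spectral projection of $\nu$ onto its eigenspaces with $\abs{\lambda}\le\delta/2$; since $\Pi_S$ commutes with $\nu$, we have $w(\nu,\delta/2)=\Abs{\nu\Pi_S}_{\tr}$ and $\Abs{\nu\Pi_S}\le\delta/2$. A naive triangle inequality $\Abs{\nu\Pi_S}_{\tr}\le\Abs{A\Pi_S}_{\tr}+\Abs{B\Pi_S}_{\tr}$ together with $\Abs{A\Pi_S}_{\tr}\le\rank(A)\Abs{A\Pi_S}$ forces the operator-norm estimate $\Abs{A\Pi_S}\le\Abs{\nu\Pi_S}+\Abs{B\Pi_S}\le\delta$, producing $(r_1+r_2)\delta+\rbra{\varepsilon_1+\varepsilon_2}/2$, which is off from the target by a factor of $2$ in the rank term. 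The fix is to split the singular values of $\nu\Pi_S$ rather than the operators: the top $r_1+r_2$ of them are each at most $\Abs{\nu\Pi_S}\le\delta/2$, contributing at most $(r_1+r_2)\delta/2$; for the tail, applying $s_{i+r_1+r_2}(\nu\Pi_S)=s_{i+r_1+r_2}(B\Pi_S+A\Pi_S)\le s_i(B\Pi_S)+s_{r_1+r_2+1}(A\Pi_S)=s_i(B\Pi_S)$ (again using $\rank(A\Pi_S)\le r_1+r_2$) and summing over $i\ge1$ bounds the tail by $\Abs{B\Pi_S}_{\tr}\le\Abs{B}_{\tr}\le\rbra{\varepsilon_1+\varepsilon_2}/2$. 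Adding the two pieces yields $w(\nu,\delta/2)\le\rbra{(r_1+r_2)\delta+\varepsilon_1+\varepsilon_2}/2$, as claimed. I would only double-check the indexing in the singular-value inequality and the commutation of $\Pi_S$ with $\nu$ (so that $\nu\Pi_S$ genuinely has singular values $\set{\abs{\lambda_j}}{\abs{\lambda_j}\le\delta/2}$); the remainder is bookkeeping.
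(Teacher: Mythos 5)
Your proof is correct, but it takes a genuinely different route from the paper's. The paper works directly with the sorted eigenvalue sequences: it applies Weyl's inequalities for eigenvalues of the Hermitian sum $\eta = \rho + \rbra*{-\sigma}$ to show that $\gamma_i$, the $i$-th largest eigenvalue of $\eta$, satisfies $-\beta_{N-i+1} \leq \gamma_i \leq \alpha_i$ for $r_1 + 1 \leq i \leq N - r_2$ (using positivity of $\rho$ and $\sigma$ exactly where you use it to sharpen $\Abs*{\rho_{\leq} - \sigma_{\leq}}$ beyond the triangle inequality), and then bounds $w\rbra*{\eta,\delta}$ by splitting the index set into the first $r_1$, the last $r_2$, and the middle block, the first two contributing at most $\rbra*{r_1+r_2}\delta$ and the middle at most $\varepsilon_1+\varepsilon_2$. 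You instead package the hypotheses into an explicit decomposition $\nu = A + B$ with $A$ of rank at most $r_1+r_2$ and $B$ small in both operator and trace norm, and then run singular-value subadditivity twice (once for the approximate rank, once, against $\nu\Pi_S$, for the $w$-bound). The two arguments rest on the same Weyl/Ky Fan machinery, but yours is more modular: the ``low-rank plus doubly-small perturbation'' decomposition isolates exactly what is used and would transfer to non-Hermitian or more general perturbative settings, whereas the paper's index bookkeeping is more elementary and avoids introducing spectral projections and products like $\nu\Pi_S$. You correctly identified the two places where a naive estimate loses a factor of $2$ (the operator norm of $B$ via positivity, and splitting singular values rather than operators in the $w$-bound); both fixes are sound, and the remaining steps ($\rank\rbra*{A\Pi_S} \leq r_1+r_2$, $\Abs*{B\Pi_S}_{\tr} \leq \Abs*{B}_{\tr}$, and the identification $w\rbra*{\nu,\delta/2} = \Abs*{\nu\Pi_S}_{\tr}$) all check out.
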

    
    \begin{proof}
        Let $\eta = \rho - \sigma$. Let the eigenvalues of $\rho$, $\sigma$ and $\eta$ be $\alpha_{i}$, $\beta_{i}$ and $\gamma_{i}$, respectively. We assume that $\alpha_i$, $\beta_i$ and $\gamma_i$ are non-increasing. Since $\rho$ is $\rbra*{r_1,\delta,\varepsilon_1}$-approximately-low-rank, then $\alpha_1 \geq \dots \geq \alpha_{r_1} > \delta \geq \alpha_{r_1+1} \geq \dots \geq \alpha_{N} \geq 0$ and $\sum_{j=r_{1}+1}^{N} \alpha_{j} \leq \varepsilon_1$. Since $\sigma$ is $\rbra*{r_2,\delta,\varepsilon_2}$-approximately-low-rank, then $\beta_1 \geq \dots \geq \beta_{r_2} > \delta \geq \beta_{r_2+1} \geq \dots \geq \beta_{N} \geq 0$ and $\sum_{j=r_{2}+1}^{N} \beta_{j} \leq \varepsilon_2$. 
        
        We only have to consider the case that $r_1+r_2 < N$. For every $r_1+1 \leq i \leq N-r_2$, by Weyl's theorem on eigenvalues \cite{Wey12}, we have
        \begin{equation}
            \alpha_{N} - \beta_{N-i+1} \leq \gamma_i \leq \alpha_{i} - \beta_{N},
        \end{equation}
        which gives $-\beta_{N-i+1} \leq \gamma_i \leq \alpha_i$ and thus $\abs*{\gamma_i} \leq \max\cbra*{\alpha_i, \beta_{N-i+1}} \leq \delta$. From this, it can be seen that $\rank_{\delta}\rbra*{\eta} \leq r_1+r_2$. Moreover,
        \begin{align}
            w\rbra*{\eta, \delta} 
            & = \sum_{j \colon \abs*{\gamma_j} \leq \delta} \abs*{\gamma_j} \\
            & = \sum_{j=1}^{r_1} \mathbbm{1}_{\abs*{\gamma_j} \leq \delta} \abs*{\gamma_j} + \sum_{j=N-r_2+1}^{N} \mathbbm{1}_{\abs*{\gamma_j} \leq \delta} \abs*{\gamma_j} + \sum_{j=r_1+1}^{N-r_2} \abs*{\gamma_j} \\
            & \leq r_1 \delta + r_2 \delta + \sum_{j=r_1+1}^{N-r_2} \max\cbra*{\alpha_j, \beta_{N-j+1}} \\
            & \leq r_1 \delta + r_2 \delta + \sum_{j=r_1+1}^{N-r_2} \alpha_j +  \sum_{j=r_1+1}^{N-r_2} \beta_{N-j+1} \\
            & \leq r_1 \delta + r_2 \delta + \varepsilon_1 + \varepsilon_2.
        \end{align}
        Therefore, $\eta$ is $\rbra*{r_1+r_2, \delta, r_1 \delta + r_2 \delta + \varepsilon_1 + \varepsilon_2}$-approximately-low-rank, and from this we see that $\eta/2 = \rbra*{\rho-\sigma}/2$ is $\rbra*{r_1+r_2, \delta/2, \rbra*{r_1 \delta + r_2 \delta + \varepsilon_1 + \varepsilon_2}/2}$-approximately-low-rank.
    \end{proof}

    Secondly, note that for every $\delta \geq 0$, $\rho$ is $\rbra*{R_{\rho}\rbra*{\delta}, \delta, W_{\rho}\rbra*{\delta}}$-approximately-low-rank. 
    For every desired precision $\varepsilon > 0$, choose $\delta_1$ and $\delta_2$ such that $W_{\rho}\rbra*{\delta_1} \leq \varepsilon/8$ and $W_{\sigma}\rbra*{\delta_2} \leq \varepsilon/8$. 
    Let $r_1 = R_{\rho}\rbra*{\delta_1}$ and $r_2 = R_{\sigma}\rbra*{\delta_2}$. 
    We take $\delta_p = 2\min\cbra*{\delta_1, \delta_2, \varepsilon/8r_1, \varepsilon/8r_2}$, then $\rho$ is $\rbra*{r_1, 2\delta_p, \varepsilon/8}$-approximately-low-rank and $\sigma$ is $\rbra*{r_2, 2\delta_p, \varepsilon/8}$-approximately-low-rank.
    By Proposition \ref{prop:app-low-rank}, it holds that $\rbra*{\rho-\sigma}/2$ is $\rbra*{r_1+r_2, \delta_p, \varepsilon/4}$-approximately-low-rank, which immediately yields the condition
    $w\rbra*{\rbra*{\rho-\sigma}/2, \delta_p} \leq \varepsilon/4$ required by Theorem \ref{thm:purified} and Theorem \ref{thm:sample}. 
    Therefore, we can apply Theorem \ref{thm:purified} to obtain a quantum algorithm with query complexity $\widetilde O\rbra*{\delta_p^{-1}\varepsilon^{-1}}$ given purified access, and apply Theorem \ref{thm:sample} to obtain a quantum algorithm with sample complexity $\widetilde O\rbra*{\delta_p^{-2}\varepsilon^{-3}}$ given identical copies. 

    In the following we give several examples of approximately low-rank trace distance estimation.
    The first one shows that the low-rank quantum states are just special cases of approximately low-rank quantum states,
    and previous results for low-rank states (Corollary \ref{corollary:purified} and Corollary \ref{corollary:sample}) can be recovered by applying theorems in this section.

    \begin{example} [Low-rank quantum states] \label{ex:low-rank-sate}
        Consider Problem~\ref{prb:main}, the trace distance estimation of two low-rank quantum states $\rho$ and $\sigma$ 
        with $\rank(\rho),\rank(\sigma)\leq r$. In this case, $\rho$ and $\sigma$ are also approximately low-rank in the sense that we have 
        $R_\rho\rbra{\delta} = R_\sigma\rbra{\delta} = r$ and $W_{\rho}\rbra{\delta} = W_{\sigma}\rbra{\delta} = r\delta$. For every desired precision $\varepsilon>0$, let $\delta_1 = \delta_2 = \varepsilon/8r$, and we have $\delta_p = \varepsilon/4r$. Therefore, we can apply Theorem \ref{thm:purified} to obtain a quantum algorithm with query complexity $\widetilde O\rbra{\delta_p^{-1}\varepsilon^{-1}} = \widetilde O\rbra{r \varepsilon^{-2}}$ given purified access; and apply Theorem \ref{thm:sample} to obtain a quantum algorithm with sample complexity $\widetilde O\rbra{\delta_p^{-2}\varepsilon^{-3}} = \widetilde O\rbra*{r^2 \varepsilon^{-5}}$ given identical copies. These results recover Corollary \ref{corollary:purified} and Corollary \ref{corollary:sample}.
    \end{example}

    The second example concerns the practical scenario when we prepare some low-rank quantum states but exposed to noise; in particular, a relatively small depolarizing noise is considered. Note that the noisy states are no longer low-rank but approximately low-rank.

    \begin{example} [Depolarizing channels] \label{ex:depolarizing-channels}
        Let $\mathcal{E}$ be a depolarizing channel acting on an $N$-dimensional Hilbert space, with parameter $\lambda > 0$:
        \begin{equation}
        \mathcal{E}\rbra*{\rho} = \rbra*{1 - \lambda} \rho + \lambda\frac{I}{N} .
        \end{equation}
        Our goal is to estimate the trace distance between $\mathcal{E}(\rho)$ and $\mathcal{E}(\sigma)$,
        where $\rank(\rho),\rank(\sigma)\leq r$. 
        Let $R\rbra*{\delta} := R_{\mathcal{E}\rbra*{\rho}}\rbra*{\delta} = R_{\mathcal{E}\rbra*{\sigma}}\rbra*{\delta}$ and $W\rbra*{\delta} := W_{\mathcal{E}\rbra*{\rho}}\rbra*{\delta} = W_{\mathcal{E}\rbra*{\sigma}}\rbra*{\delta}$, then
        \begin{equation}
        R\rbra*{\delta} = \begin{cases}
            r, & \delta \geq \frac{\lambda}{N}, \\
            N, & \textup{otherwise},
        \end{cases}
        \end{equation}
        \begin{equation}
        W\rbra*{\delta} = \begin{cases}
            \frac{\lambda}{N}\rbra*{N-r} + r\delta, & \delta \geq \frac{\lambda}{N}, \\
            N\delta, & \textup{otherwise}.
        \end{cases}
        \end{equation}
        When $\lambda$ is relatively small;
        that is, when the precision $\varepsilon \gg \lambda$,
        one can choose $\delta_1 = \delta_2 = \frac{\varepsilon - 8\lambda}{8r} + \frac{\lambda}{N} = \Theta\rbra*{ \frac{\varepsilon}{r} + \frac{\lambda}{N} }$ satisfying $W\rbra*{\delta_1} = W\rbra*{\delta_2} \leq \varepsilon/8$. Note that $r_1 = R\rbra*{\delta_1} = r_2 = R\rbra*{\delta_2} = r$ (because $\delta_1 = \delta_2 \geq \frac{\lambda}{N}$), and thus $\delta_p = 2\min\cbra*{\delta_1, \delta_2, \varepsilon/8r_1, \varepsilon/8r_2} = \Theta\rbra{\varepsilon/r}$.
        In this case, our quantum algorithms can estimate the trace distance between ${\mathcal{E}\rbra*{\rho}}$ and ${\mathcal{E}\rbra*{\sigma}}$ within additive error $\varepsilon$ with the same complexity as that in Example \ref{ex:low-rank-sate} for low-rank quantum states. 
    \end{example}

    The next example considers estimating the trace distance between the Gibbs states of gapped Hamiltonians. 

    \begin{example} [Gibbs states of gapped Hamiltonians]
        Suppose that $H$ (resp. $G$) is an $N$-dimensional Hamiltonian with a gap $\Delta$ between the $k$-th and the $\rbra{k+1}$-th smallest eigenvalues of $H$ (resp. $G$).
        Let $\rho = \exp\rbra{-H}/\tr\rbra{\exp\rbra{-H}}$ and $\sigma = \exp\rbra{-G}/\tr\rbra{\exp\rbra{-G}}$ be the Gibbs states of $H$ and $G$, respectively. 
        Let $R\rbra*{\delta} := R_{\rho}\rbra*{\delta} = R_{\sigma}\rbra*{\delta}$ and $W\rbra*{\delta} := W_{\rho}\rbra*{\delta} = W_{\sigma}\rbra*{\delta}$, then:
        \begin{equation}
        R\rbra*{\delta} = \begin{cases}
            k, & \delta > \rbra*{\exp\rbra{\Delta} k+1}^{-1}, \\
            N, & \textup{otherwise},
        \end{cases}
        \end{equation}
        \begin{equation}
        W\rbra*{\delta} = \begin{cases}
            \frac{N-k}{\exp\rbra{\Delta}k +1} + k\delta, & \delta > \rbra*{\exp\rbra{\Delta} k+1}^{-1}, \\
            N\delta, & \textup{otherwise}.
        \end{cases}
        \end{equation}
        Suppose the desired precision is $\varepsilon \gg \exp\rbra*{-\Delta} N/k$; this lower bound can be small for large gap $\Delta$.
        We can choose 
        \begin{align}
        \delta_1 = \delta_2 & = \frac{1}{k} \rbra*{ \frac{\varepsilon}{8} - \frac{N-k}{\exp\rbra{\Delta}k + 1} } \\
        & = \Theta\rbra*{\frac{\varepsilon}{k}} \gg \frac{1}{\exp\rbra*{\Delta}k + 1},
        \end{align}
        which satisfies $W\rbra*{\delta_1} = W\rbra*{\delta_2} \leq \varepsilon/8$, and $r_1 = R\rbra*{\delta_1} = r_2 = R\rbra*{\delta_2} = k$; 
        then we have $\delta_p = 2\min\cbra*{\delta_1, \delta_2, \varepsilon/8r_1, \varepsilon/8r_2} = \Theta\rbra{\varepsilon/k}$.
        With these, we can apply Theorem \ref{thm:purified} to obtain a quantum algorithm with query complexity $\widetilde O\rbra{\delta_p^{-1}\varepsilon^{-1}} = \widetilde O\rbra{k\varepsilon^{-2}}$ given purified access, and apply Theorem \ref{thm:sample} to obtain a quantum algorithm with sample complexity $\widetilde O\rbra{\delta_p^{-2}\varepsilon^{-3}} = \widetilde O\rbra{k^2\varepsilon^{-5}}$ given identical copies. 
        The result is similar to that for low-rank quantum states (Example \ref{ex:low-rank-sate} and Example \ref{ex:depolarizing-channels}).
    \end{example}

    At last, we give an artificial example that can be solved by our quantum algorithms, where quantum states are no longer related to low-rank conditions but the eigenvalues of quantum states have certain upper bounds. 
    This non-trivial example shows that our algorithms have the potential to be applied to more cases where quantum states are not low-rank. 

    \begin{example}
        Suppose two $N$-dimensional quantum states $\rho$ and $\sigma$ have eigenvalues $\alpha_1 \geq \dots \geq \alpha_{N}$ and $\beta_1 \geq \dots \geq \beta_{N}$, respectively, where $\max\cbra{\alpha_i, \beta_i} \leq C/i^2$ for some constant $C > 0$. 
        Let $R\rbra*{\delta} := R_{\rho}\rbra*{\delta} = R_{\sigma}\rbra*{\delta}$ and $W\rbra*{\delta} := W_{\rho}\rbra*{\delta} = W_{\sigma}\rbra*{\delta}$, then:
        \begin{equation}
        R\rbra*{\delta} = \sqrt{\frac{C}{\delta}}, 
        \end{equation}
        \begin{equation}
        W\rbra*{\delta} = \sum_{i=\ceil*{\sqrt{C/\delta}}}^{N} \frac{C}{i^2} \leq \frac{C}{\sqrt{C/\delta}-1} - \frac{C}{N-1}.
        \end{equation}
        For every desired precision $\varepsilon > 0$, we can choose 
        \begin{equation}
            \delta_1 = \delta_2 = \frac{C \rbra*{\frac{\varepsilon}{8} + \frac{C}{N-1}}^2}{\rbra*{\frac{\varepsilon}{8} + \frac{C}{N-1} + C}^2} = \Theta\rbra*{\varepsilon^2},
        \end{equation}
        which satisfies $W\rbra{\delta_1} = W\rbra*{\delta_2} \leq \varepsilon/8$.
        Note that $r_1 = r_2 = R\rbra{\delta_1} = R\rbra{\delta_2} = \Theta\rbra{\varepsilon^{-1}}$, and therefore $\delta_p = 2\min\cbra*{\delta_1, \delta_2, \varepsilon/8r_1, \varepsilon/8r_2} = \Theta\rbra{\varepsilon^2}$.
        With these, we can apply Theorem \ref{thm:purified} to obtain a quantum algorithm with query complexity $\widetilde O\rbra{\delta_p^{-1}\varepsilon^{-1}} = \widetilde O\rbra{\varepsilon^{-3}}$ given purified access, and apply Theorem \ref{thm:sample} to obtain a quantum algorithm with sample complexity $\widetilde O\rbra{\delta_p^{-2}\varepsilon^{-3}} = \widetilde O\rbra{\varepsilon^{-7}}$ given identical copies. 
    \end{example}

    \section*{Acknowledgment}
    
    We would like to thank John Wright for explaining the quantum algorithms for quantum state certification in \cite{BOW19}. 
    We would like to thank Andr{\'{a}}s Gily{\'{e}}n and Alexander Poremba for communication regarding the related work \cite{GP22}.
    We would like to thank Yupan Liu for pointing out the $\mathsf{BQP}$-completeness of pure-mixed fidelity estimation shown in \cite{RASW23}, and pointing out the related work \cite{CCC19} which conjectured that low-rank trace distance estimation is in $\mathsf{BQP}$. 
    We would like to thank Ansis Rosmanis for pointing out that, given purified access, the SWAP test \cite{BCWdW01} can estimate the fidelity between pure quantum states within additive error $\varepsilon$ using $O\rbra*{1/\varepsilon}$ queries equipped with quantum amplitude estimation \cite{BHMT02}.
    We would like to thank Mark M.\ Wilde for helpful comments.

    Qisheng Wang was supported by the MEXT Quantum Leap Flagship Program (MEXT Q-LEAP) grants \mbox{No.~JPMXS0120319794}. Zhicheng Zhang was supported by the Sydney Quantum Academy, Sydney, NSW, Australia.

\addcontentsline{toc}{section}{References}

\bibliographystyle{unsrturl}
\bibliography{main}
    
    \appendix
    
    \section{Trace Distance Estimation via the SWAP Test} \label{app:estimation-via-swap}
    
    Our algorithm can estimate the trace distance between pure quantum states (i.e., $r = 1$) with query complexity $\widetilde O\rbra*{1/\varepsilon^2}$, given purified access, which matches the query complexity $O\rbra*{1/\varepsilon^2}$ by the SWAP test \cite{BCWdW01}. 
    To see this, suppose two pure quantum states $\ket{\psi}$ and $\ket{\phi}$ are given by two quantum unitary operators $U_{\psi}$ and $U_{\phi}$ such that $U_{\psi}\ket{0} = \ket{\psi}$ and $U_{\phi}\ket{0} = \ket{\phi}$. 
    By the SWAP test \cite{BCWdW01} and quantum amplitude estimation (Theorem \ref{thm:amp-estimation}), we can estimate  $\abs*{\braket{\psi}{\phi}}^2$ within additive error $\delta$ using $O\rbra*{1/\delta}$ queries to $U_\psi$ and $U_\phi$.
    That is, we can obtain $\widetilde x$ with high probability such that
    \begin{equation}
    \abs*{\widetilde x - \abs*{\braket{\psi}{\phi}}^2} \leq \delta.
    \end{equation}
    Note that the trace distance between pure quantum states $\ket{\psi}$ and $\ket{\phi}$ is given by 
    \begin{equation}
    T\rbra*{\ket{\psi}, \ket{\phi}} = \sqrt{1 - \abs*{\braket{\psi}{\phi}}^2}.
    \end{equation}
    Following this formula, we can estimate the trace distance $T\rbra*{\ket{\psi}, \ket{\phi}}$ by $\sqrt{1 - \widetilde x}$. 
    
    \begin{proposition}
    With high probability, the error is bounded by
    \begin{equation}
        \abs*{\sqrt{1 - \widetilde x} - T\rbra*{\ket{\psi}, \ket{\phi}}} \leq 2 \sqrt{\delta}.
    \end{equation}
    \end{proposition}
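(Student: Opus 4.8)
The plan is to reduce the claim to an elementary estimate on the square-root function. Write $b = \abs*{\braket{\psi}{\phi}}^2 \in \sbra*{0,1}$, so that $T\rbra*{\ket{\psi}, \ket{\phi}} = \sqrt{1 - b}$, and recall the guarantee from quantum amplitude estimation that $\abs*{\widetilde x - b} \leq \delta$ with high probability. Since the output of the estimator is $\sqrt{1 - \widetilde x}$, the whole proposition amounts to controlling $\abs*{\sqrt{1 - \widetilde x} - \sqrt{1 - b}}$ in terms of $\abs*{\widetilde x - b}$.

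The key tool I would use is the H\"older-$\tfrac12$ inequality for the square root: for all $s, t \geq 0$,
\begin{equation}
    \abs*{\sqrt{s} - \sqrt{t}} \leq \sqrt{\abs*{s - t}}.
\end{equation}
This follows from a one-line computation: assuming $s \geq t \geq 0$, one has $\rbra*{\sqrt{s} - \sqrt{t}}^2 = s + t - 2\sqrt{st} \leq s - t = \abs*{s - t}$, using $\sqrt{st} \geq t$. Applying this with $s = 1 - \widetilde x$ and $t = 1 - b$ turns the left-hand side into $\sqrt{\abs*{\widetilde x - b}} \leq \sqrt{\delta}$, which already beats the claimed bound $2\sqrt{\delta}$.

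The one point that requires care --- and the reason the error scales like $\sqrt{\delta}$ rather than linearly in $\delta$ --- is the behaviour of $x \mapsto \sqrt{1 - x}$ near the endpoint $x = 1$, where its modulus of continuity is only H\"older-$\tfrac12$ and no Lipschitz bound is available. Concretely, the amplitude estimate $\widetilde x$ may overshoot and land slightly above $1$, in which case $\sqrt{1 - \widetilde x}$ is not real. I would handle this by clipping $\widetilde x$ into the interval $\sbra*{0, 1}$ before taking the square root; since $b \in \sbra*{0,1}$, projecting $\widetilde x$ onto $\sbra*{0,1}$ can only decrease its distance to $b$, so the guarantee $\abs*{\widetilde x - b} \leq \delta$ is preserved and the square-root inequality applies verbatim. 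The resulting bound $\sqrt{\delta}$ is then trivially at most $2\sqrt{\delta}$, the factor of two leaving comfortable slack for this endpoint adjustment; this is the only genuine obstacle, and it is entirely mild.
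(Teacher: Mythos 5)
Your proof is correct, and it takes a genuinely different (and slightly sharper) route than the paper's. The paper argues by a two-case split on whether $\min\cbra*{\widetilde x, \abs*{\braket{\psi}{\phi}}^2}$ exceeds $1-\delta$: in the ``far from $1$'' case it uses the conjugate identity $\sqrt{1-\widetilde x}-\sqrt{1-b} = \rbra*{b - \widetilde x}/\rbra*{\sqrt{1-\widetilde x}+\sqrt{1-b}}$ and lower-bounds the denominator by $\sqrt{\delta}$, while in the ``close to $1$'' case it crudely bounds each square root by $\sqrt{\delta}$ and adds them --- which is exactly where the factor $2$ in the statement comes from. Your single application of the H\"older-$\tfrac12$ inequality $\abs*{\sqrt{s}-\sqrt{t}} \leq \sqrt{\abs*{s-t}}$ subsumes both cases at once and yields the stronger bound $\sqrt{\delta}$, making the factor $2$ unnecessary; the one-line verification you give of that inequality is valid. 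Your remark about clipping $\widetilde x$ into $\sbra*{0,1}$ is a harmless and reasonable precaution, though it is not strictly needed here: the output of quantum amplitude estimation (Theorem 12 of \cite{BHMT02}) is of the form $\sin^2\rbra*{\pi \widetilde\theta}$ and hence already lies in $\sbra*{0,1}$, so $\sqrt{1-\widetilde x}$ is always real --- the paper's proof implicitly relies on this same fact. Net effect: your argument is shorter, avoids the case analysis, and would let one replace $2\sqrt{\delta}$ by $\sqrt{\delta}$ in the proposition (and correspondingly relax $\delta = \varepsilon^2/4$ to $\delta = \varepsilon^2$), at no cost.
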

    \begin{proof}
        We consider two cases.
        \begin{enumerate}
            \item $\min\cbra*{\widetilde x, \abs*{\braket{\psi}{\phi}}^2} \leq 1-\delta$. In this case, $\max \cbra*{ \sqrt{1-\widetilde x}, \sqrt{1-\abs*{\braket{\psi}{\phi}}^2} } \geq \sqrt{\delta}$. We have
            \begin{align}
                \abs*{\sqrt{1 - \widetilde x} - T\rbra*{\ket{\psi}, \ket{\phi}}}
                & = \abs*{ \frac{\widetilde x - \abs*{\braket{\psi}{\phi}}^2}{\sqrt{1-\widetilde x} + \sqrt{1-\abs*{\braket{\psi}{\phi}}^2}} } \\
                & \leq \frac{\abs*{ \widetilde x - \abs*{\braket{\psi}{\phi}}^2 }}{\sqrt{\delta}} \\
                & \leq \frac{\delta}{\sqrt{\delta}} = \sqrt{\delta}.
            \end{align}
            
            \item $\min\cbra*{\widetilde x, \abs*{\braket{\psi}{\phi}}^2} > 1-\delta$. In this case, $\max \cbra*{ \sqrt{1-\widetilde x}, \sqrt{1-\abs*{\braket{\psi}{\phi}}^2} } < \sqrt{\delta}$.
            \begin{align}
                \abs*{\sqrt{1 - \widetilde x} - T\rbra*{\ket{\psi}, \ket{\phi}}}
                & = \abs*{\sqrt{1-\widetilde x}} + \abs*{\sqrt{1-\abs*{\braket{\psi}{\phi}}^2}} \\
                & \leq 2\sqrt{\delta}.
            \end{align}
        \end{enumerate}
        The both cases together yield the proof.
    \end{proof}
    Finally, by taking $\delta = \varepsilon^2/4$, we can estimate the trace distance $T\rbra*{\ket{\psi}, \ket{\phi}}$ within additive error $\varepsilon$ using $O\rbra*{1/\delta} = O\rbra*{1/\varepsilon^2}$ queries to $U_\psi$ and $U_\phi$. 
    In the same way, if only identical copies of $\ket{\psi}$ and $\ket{\phi}$ are given, we can estimate their trace distance within additive error $\varepsilon$ using $O\rbra*{1/\varepsilon^4}$ samples of them.
    We explicitly state these simple results as follows.
    
    \begin{theorem} [Trace distance estimation for pure quantum states]
        There is a quantum algorithm that estimates the trace distance of two pure quantum states within additive error $\varepsilon$, 
        \begin{itemize}
            \item Given purified access, with query complexity $O\rbra*{1/\varepsilon^2}$, and time complexity $O\rbra*{1/\varepsilon^2 \cdot \log\rbra*{N} }$; and
            \item Given sample access, with sample complexity $O\rbra*{1/\varepsilon^4}$, time complexity $O\rbra*{1/\varepsilon^4 \cdot \log\rbra*{N} }$, and depth complexity $O\rbra*{1}$. 
        \end{itemize}
    \end{theorem}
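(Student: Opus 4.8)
The plan is to package the discussion preceding the statement into the two bullet points, treating purified and sample access separately. For the purified-access case I would invoke the argument already set up above: the SWAP test \cite{BCWdW01} combined with quantum amplitude estimation (Theorem \ref{thm:amp-estimation}) estimates $\abs*{\braket{\psi}{\phi}}^2$ within additive error $\delta$ using $O\rbra*{1/\delta}$ queries to $U_\psi$ and $U_\phi$, and the preceding Proposition shows that reporting $\sqrt{1-\widetilde x}$ recovers the trace distance within error $2\sqrt{\delta}$. Setting $\delta = \varepsilon^2/4$ then gives additive error $\varepsilon$ with query complexity $O\rbra*{1/\varepsilon^2}$. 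The time complexity acquires a factor $\log\rbra*{N}$ because a SWAP test on two $n$-qubit states uses $O\rbra*{n} = O\rbra*{\log N}$ elementary gates (one controlled-swap per qubit), giving time complexity $O\rbra*{1/\varepsilon^2 \cdot \log\rbra*{N}}$.

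For the sample-access case the essential change is that amplitude estimation is unavailable, since it requires coherent access to the state-preparation unitaries and their inverses rather than mere copies. Instead I would estimate $\abs*{\braket{\psi}{\phi}}^2$ by repeating the SWAP test directly on fresh copies: each run outputs a single bit equal to $0$ with probability $\rbra*{1 + \abs*{\braket{\psi}{\phi}}^2}/2$, so by Hoeffding's inequality $O\rbra*{1/\delta^2}$ independent repetitions suffice to estimate this probability, and hence $\abs*{\braket{\psi}{\phi}}^2$, within additive error $\delta$ with high probability. Each repetition consumes one copy of $\ket{\psi}$ and one of $\ket{\phi}$. Applying the same Proposition with $\delta = \varepsilon^2/4$ yields the trace distance within error $\varepsilon$, now with sample complexity $O\rbra*{1/\delta^2} = O\rbra*{1/\varepsilon^4}$, and the time complexity again absorbs a factor $\log\rbra*{N}$ from implementing each SWAP test on $n$-qubit registers.

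The main obstacle is justifying the $O\rbra*{1}$ depth for sample access. A naive circuit-based SWAP test drives $n$ transversal swaps from a single ancilla and therefore does not obviously run in constant depth. To obtain $O\rbra*{1}$ depth I would instead use the destructive (measurement-based) variant, in which each of the $n$ corresponding qubit pairs—one drawn from a copy of $\ket{\psi}$ and one from a copy of $\ket{\phi}$—is measured in the Bell basis by a constant-depth circuit (a transversal CNOT layer followed by a Hadamard layer), and a suitable function of the outcome statistics recovers $\abs*{\braket{\psi}{\phi}}^2$. Since all qubit pairs are measured in parallel, and the $O\rbra*{1/\varepsilon^4}$ repetitions act on disjoint registers and are statistically independent, they can all be executed simultaneously; only the final classical averaging is sequential and contributes no quantum depth, so the overall depth stays $O\rbra*{1}$ while the gate and sample counts scale as stated. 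The quadratic gap between the purified query complexity $O\rbra*{1/\varepsilon^2}$ and the sample complexity $O\rbra*{1/\varepsilon^4}$ is exactly the loss from replacing amplitude estimation by naive repetition, and I would not expect it to be removable under pure sample access.
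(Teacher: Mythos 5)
Your proposal follows essentially the same route as the paper: the SWAP test combined with quantum amplitude estimation for purified access, naive repetition of the SWAP test with Hoeffding for sample access, both fed through the error-propagation proposition with $\delta = \varepsilon^2/4$. Your justification of the $O\rbra*{1}$ depth via the destructive Bell-basis variant of the SWAP test is a correct elaboration of a point the paper asserts without detail.
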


\end{document}